\newif\ifextended
\renewcommand\footnotetextcopyrightpermission[1]{}
\newtheorem{theorem}{Theorem}[section]
\newtheorem{lemma}[theorem]{Lemma}
\newtheorem{corollary}[theorem]{Corollary}
\newtheorem{observation}{Observation}
\newcommand{\sysname}{uBFT\xspace}
\newcommand{\tcbfull}{Consistent Tail Broadcast\xspace}
\newcommand{\tcb}{CTBcast\xspace}
\newcommand{\cbfull}{Consistent Broadcast\xspace}
\newcommand{\cb}{CBcast\xspace}
\newcommand{\tbfull}{Tail Broadcast\xspace}
\newcommand{\tb}{TBcast\xspace}
\newcommand{\us}{µs\xspace}
\newcommand{\summary}{summary\xspace}
\newcommand{\summaries}{summaries\xspace}
\renewcommand{\paragraph}[1]{\medskip\noindent\textbf{#1}}
\def\t{\textit}
  \providecommand\BibTeX{{%
    \normalfont B\kern-0.5em{\scshape i\kern-0.25em b}\kern-0.8em\TeX}}}
\begin{document}

\ifextended
\title{uBFT: Microsecond-Scale BFT using Disaggregated Memory \textnormal{[Extended Version]}}
\else
\title{uBFT: Microsecond-Scale BFT using Disaggregated Memory}
\fi


\author{Marcos K. Aguilera}
\email{maguilera@vmware.com}
\affiliation{%
  \institution{VMware Research}
  \country{United States}
}

\author{Naama Ben-David}
\email{bendavidn@vmware.com}
\affiliation{%
  \institution{VMware Research}
  \country{United States}
}

\author{Rachid Guerraoui}
\email{rachid.guerraoui@epfl.ch}
\affiliation{%
  \institution{EPFL}
  \country{Switzerland}
}

\author{Antoine Murat}
\email{antoine.murat@epfl.ch}
\affiliation{%
  \institution{EPFL}
  \country{Switzerland}
}

\author{Athanasios Xygkis}
\email{athanasios.xygkis@epfl.ch}
\affiliation{%
  \institution{EPFL}
  \country{Switzerland}
}

\author{Igor Zablotchi}
\email{igorz@mit.edu}
\affiliation{%
  \institution{MIT}
  \country{United States}
}


\ifextended
\renewcommand{\shortauthors}{Aguilera, et al.}
\else
\renewcommand{\shortauthors}{Marcos K. Aguilera, Naama Ben-David, Rachid Guerraoui, Antoine Murat, Athanasios Xygkis, and Igor Zablotchi}
\fi

\begin{abstract}
We propose uBFT, the first State-Machine Replication (SMR) system to achieve microsecond-scale latency in data centers, while using only $2f{+}1$ replicas to tolerate $f$ Byzantine failures.
The Byzantine Fault Tolerance (BFT) provided by \sysname is essential as pure crashes appear to be a mere illusion with real-life systems reportedly failing in many unexpected ways. %
uBFT relies on a small non-tailored trusted computing base---disaggregated memory---and consumes a practically bounded amount of memory (both local and disaggregated).
uBFT is based on a novel abstraction called \tcbfull, which we use to prevent equivocation while bounding memory. 
We implement uBFT using RDMA-based disaggregated memory and obtain an end-to-end latency of as little as 10\,\us.
This is at least 50$\times$ faster than MinBFT%
, a state-of-the-art $2f{+}1$ BFT SMR system based on Intel's SGX. 
We use \sysname to replicate two key-value stores (Memcached
and Redis), as well as a financial order matching engine (Liquibook).
These applications have low latency (up to 20\,\us) and become Byzantine tolerant with
  as little as 10\,\us more.
The price for \sysname is a small amount of reliable disaggregated
  memory (less than 1\,MiB), which in our prototype
  consists of a 
  small number of memory servers connected through RDMA and replicated
  for fault tolerance.

\end{abstract}

\begin{CCSXML}
<ccs2012>
<concept>
<concept_id>10010520.10010575.10010577</concept_id>
<concept_desc>Computer systems organization~Reliability</concept_desc>
<concept_significance>500</concept_significance>
</concept>
<concept>
<concept_id>10010520.10010575.10010578</concept_id>
<concept_desc>Computer systems organization~Availability</concept_desc>
<concept_significance>500</concept_significance>
</concept>
</ccs2012>
\end{CCSXML}

\ccsdesc[500]{Computer systems organization~Reliability}
\ccsdesc[500]{Computer systems organization~Availability}

\keywords{Byzantine fault tolerance, microsecond scale, replication, disaggregated memory, fast path, finite memory, RDMA}

\maketitle

\section{Introduction}

Data center applications such as social networks, web search, e-commerce, and banking 
increasingly need both microsecond-scale performance~\cite{mukiller,mumarket,nanopu,putting-mu-back-in-muservices} and strong fault tolerance,
in order to deliver on their promise of being the backbone of today's online services.
Indeed, strong fault tolerance is required by real-life systems which reportedly fail in many unexpected ways. 
Apart from simple crashes, failures in distributed systems 
range from software/configuration bugs~\cite{taxdc, why-do-internet-services-fail}, 
to hardware failures~\cite{flash-memory-failures, ssd-failures, dc-net-reliability}, 
to hardly detectable hardware bugs~\cite{cores-that-dont-count, dattatraya2022detecting, fail-slow-at-scale, gray-failure} 
and up to malicious activity~\cite{sharing-the-data-center, private-eye}.
Traditionally, protecting against the plethora of different failures required slow and expensive Byzantine Fault Tolerant (BFT) protocols.

The standard way to achieve fault tolerance is state-machine replication (SMR).
Typical BFT SMR protocols incur milliseconds of latency~\cite{bft-smart, hyperledger},
    require a large number of replicas
         ($3f{+}1$ to tolerate $f$ failures)~\cite{castro1999practical, hotstuff, honeybadger},
    consume unbounded memory~\cite{VeroneseCBLV13}, and/or
    rely on a large trusted computing base~\cite{behl2017hybrids, KapitzaBCDKMSS12, VeroneseCBLV13}.
  These reasons might explain why BFT has had no adoption in data centers.

In this paper, we propose \sysname, the first
BFT SMR system
  that simultaneously offers four key features:
  (1)~microsecond-scale latency,
  (2)~few replicas ($2f{+}1$),
  (3)~practically bounded memory, and
  (4)~a small non-tailored trusted computing base.
In the common case,
\sysname leverages unanimity to replicate requests in as little as 10\,\us end-to-end
without invoking the trusted
  computing base or expensive cryptographic primitives.
In the slow path---when there are failures or slowness in the network---\sysname
  uses a novel protocol that combines digital signatures with
  judicious use of a trusted computing base.
The trusted computing base in \sysname is non-tailored and small: rather than
  trusted enclaves with arbitrary logic such as Intel's SGX~\cite{sgx-explained}
  or trusted hypervisors~\cite{zhuo2014mft}---which
  have large attack
  surfaces due to their complexity~\cite{sgx-vulnerabilities-survey1, smashex}---\sysname relies solely on disaggregated
  memory, a technology increasingly present in data centers
  due to the availability
  of RDMA~\cite{rdma-manual} today and CXL~\cite{intelcxl} in a few years.
The key mechanism from disaggregated memory we leverage in \sysname is single-writer
  regions (regions of memory that can be written by one designated host and read by others), which are implemented in hardware through access permissions.

Providing the above four features is challenging for BFT protocols.
To get microsecond-scale latency, BFT protocols need to avoid
  expensive public-key cryptography and reduce communication rounds in the
  common path, and doing so has typically required increasing rather
  than decreasing the number of replicas~\cite{kotla2007zyzzyva, qubft, fast-byzantine-consensus}.
Meanwhile, decreasing the number of replicas has usually required
  unbounded memory, sophisticated, or tailored trusted computing bases such as
  append-only-memory~\cite{a2m}, SGX~\cite{behl2017hybrids}, TrInc~\cite{levin2009trinc},
  or 
  reliable hypervisors~\cite{zhuo2014mft}.
Limiting the amount of memory is a significant challenge in the
  design of \sysname, as the standard technique to handle Byzantine behavior in systems with $2f{+}1$ replicas requires storing all messages received, leading to
  long message histories~\cite{aguileraBGPXZ21, VeroneseCBLV13},
  which consume unbounded memory.
Finally, not tailoring the trusted computing base to
  our needs requires designing around existing technologies---in our case
  disaggregated memory---rather than custom hardware.

To respond to these challenges, \sysname introduces a new abstraction called
  \tcbfull (\tcb) that we use to prevent
  equivocation~\cite{on-the-subject-of-non-equivocation}, while requiring a practically bounded
  amount of memory.\footnote{
    Memory consumption scales logarithmically with the number of operations. With the exception of sequence numbers, all values are represented using a bounded number of bits.
  }
Equivocation---a major source of problems in a system with Byzantine
  failures~\cite{a2m}---occurs
  when a faulty process incorrectly 
  sends different information
  to different processes, which may cause the state of
  replicas to diverge.
\tcb prevents equivocation for all messages, but only ensures the delivery of the last $t$
  broadcast messages, where $t$ is a parameter that trades memory for latency (we explain how to set it in Section \ref{sec:evaluation}).
 
The price for \sysname is a small amount (less than 1\,MiB) of reliable disaggregated memory.
\sysname is designed modularly to work with a generic such component;
  our
  current prototype implements this component
  using RDMA and a set of memory nodes
  that themselves are replicated for fault
  tolerance.
These nodes add to the total number of replicas,
  but these replicas are tiny and simple: they do not
  store the state of the application, just
  a few in-flight coordination messages.
Moreover, their
  functionality is application-oblivious,
  so they can be shared among many 
  applications, amortizing their cost.
The memory nodes that provide the disaggregated memory constitute the trusted computing
  base in our prototype and are assumed to fail only by crashing.
This shrinks the vulnerability of the system compared to currently deployed
  crash-tolerant SMR systems, in which \emph{all} components can fail only
  by crashing, effectively making the trusted computing base be the
  entire data center.

We evaluate \sysname against two state-of-the-art systems.
First, we compare it against Mu,
  the fastest SMR system to our knowledge,
  but that tolerates only crash
  failures.
Compared to Mu, \sysname increases the end-to-end latency by only $2\times$, while tolerating Byzantine failures.
Second, we compare \sysname against MinBFT, a state-of-the-art $2f{+}1$ BFT SMR system, and showcase that our system has more than $50\times$ and $2\times$ better latency when operating in its fast and slow path, respectively.
We also use \sysname to replicate two low-latency key-value stores (Memcached~\cite{memcached} and Redis~\cite{redis}), 
and a financial order matching engine (Liquibook~\cite{liquibook}).
All these applications have request latencies of less than 20\,\us when unreplicated and become Byzantine-resilient with as little as 10\,\us more.

In summary, our main contributions are the following:
\begin{itemize}
    \item The design of \sysname, a BFT system for state-machine replication
          with microsecond-scale latency in the
          common case, using only $2f{+}1$ replicas, practically bounded memory, and a small trusted computing base (disaggregated memory).
    \item A new abstraction against equivocation, \tcbfull (\tcb), and a
          protocol for \tcb that uses a small amount of disaggregated memory and has a signature-less fast path.
    \item An open-source implementation of \sysname, \tcb, and reliable shared disaggregated memory using RDMA, available at \url{https://github.com/LPD-EPFL/ubft}.
    \item A thorough evaluation of the performance of \sysname and its applications.

\end{itemize}

\section{Background} \label{sec:background}

\subsection{State-Machine Replication}
State-Machine Replication (SMR) is a universal technique for replicating a service
(e.g., a database) across multiple machines, called \emph{replicas}, to tolerate failures.
The core idea of SMR is that all replicas receive and execute the same client requests
in the same order, ensuring that their states remain consistent.

At the heart of any SMR system lies consensus~\cite{cachin2011rachid}, which provides
total ordering by ensuring that all replicas agree on the same sequence of client
requests.
Consensus protocols are often leader-based~\cite{castro1999practical, rbft, hunt2010zookeeper, mencius, epaxos, bftbench}:
a replica designated as the \emph{leader} orders client requests and forwards them
to the other \emph{follower} replicas to ensure agreement.
Each replica then executes the ordered requests on its local copy of the service
and forwards the resulting output to the clients.

Despite their close relationship, consensus and SMR differ in their memory requirements.
Consensus must guarantee agreement among participants indefinitely; consequently,
decided requests may need to be retained forever to accommodate unresponsive participants,
leading to unbounded memory usage.
In contrast, SMR does not require storing the entire request history, as it focuses
only on replicating the application state, which may be finite even in the presence
of an unbounded number of requests.
As a result, SMR systems must adapt consensus protocols to operate using finite
memory~\cite{shahmirzadi2009relaxed}.

\subsection{Non-Equivocation}
Byzantine processes can equivocate, i.e., they can maliciously say different things to different processes.
In SMR, specifically, a Byzantine leader may propose different client requests to try to cause replicas to diverge,
justifying why
SMR protocols must ensure non-equivocation.

Under the Byzantine asynchronous model, $3f{+}1$ replicas are needed to prevent equivocation~\cite{lamport1982byzantine}.
However, if equivocation is prevented and transferable authentication is available, Byzantine SMR requires only $2f{+}1$ replicas~\cite{correia2004half}, the same number as in the crash-stop case.
With transferable authentication, a process that verifies a proof about the origin of a message can transfer the proof to other processes and be assured they can also verify it.
For example, digital signatures provide transferable authentication, while arrays of Message Authentication Codes (MACs) do not~\cite{matrix-signatures}.

Preventing equivocation using up to $2f{+}1$ replicas requires a compromise~\cite{correia2004half}, i.e., a hybrid model where part of the system---called the \emph{trusted computing base}---fails only by crashing.
Ideally, this base is as small as possible, since a small and simple base is less likely to be susceptible to Byzantine failures (e.g., vulnerabilities, bugs).

\subsection{Disaggregated Memory} \label{sec:background:disagmem}

Disaggregated memory is an emerging data-center architecture that separates compute and memory resources by providing a shared memory pool that compute nodes access over a network.
The memory pool has limited compute capabilities, which it uses for management tasks such as connection handling.
Disaggregated memory improves memory utilization, separates the scaling of
compute and memory, and achieves better availability due to the separation of fault domains~\cite{dm-open-problems}.

Disaggregated memory can be provided by different technologies.
The emerging CXL standard will support disaggregated memory in the future~\cite{directcxl}, while today disaggregated memory is available via 
Remote Direct Memory Access (RDMA)~\cite{rdma-manual} on InfiniBand~\cite{pfister2001infiniband} or RoCE~\cite{beck2011roce-performance}.
RDMA is a networking technology that allows a process to read or write the memory of another machine without involving the CPU of the latter.
Combined with kernel-bypass, RDMA enables sub-microsecond communication and stringent tail latency.
A key feature of RDMA that we leverage is the ability to set access rights to RDMA-exposed memory individually for each accessor.

\begin{figure}
    \centering
    \includegraphics[width=1.0\columnwidth]{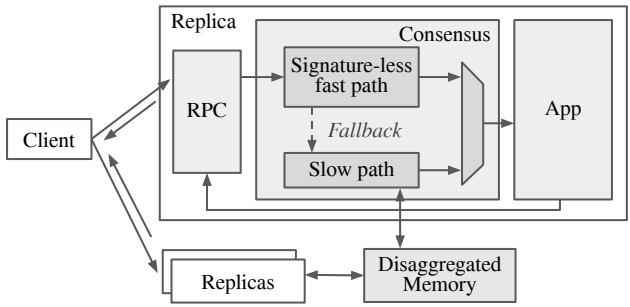}

    \caption{Overview of \sysname's architecture.}
    \label{fig:arch}
\end{figure}

\subsection{Model}
\label{sec:background:model}

We consider a  system with $2f{+}1$ compute nodes and single-writer multiple-reader disaggregated memory. 
Up to $f$ compute nodes are Byzantine and may thus fail arbitrarily.
We assume network connections are authenticated and tamper-proof (processes know who they get messages from and messages cannot be altered) and 
    eventually available (network partitions are intermittent).
We also assume that the disaggregated memory is trusted: it may
  fail only by crashing.
The disaggregated memory is divided into chunks, where each chunk is readable by all compute nodes
  and writable by a designated compute node.
We assume the existence of public-key cryptography:
  processes can sign messages using their private key and verify unforgeable signatures using the pre-published public keys of all processes.
We further assume eventual synchrony: network and processing delays are unbounded until an unknown \emph{Global Stabilization Time} (GST) after which delays are bounded by a known $\delta$.
Lastly, our
system assumes bounded clock drift for safety, i.e., the clocks of correct processes drift from each other
  with a bounded rate.
These assumptions are common for distributed systems in data centers~\cite{castro1999practical, aguilera2009no-time-for-asynchrony, falcon, yang2019synchronous-dc, farm2, ukharon}.

In our prototype,
we do \emph{not} assume that we are given a reliable disaggregated memory~\cite{carbink,hydra}, but rather show how to implement
  a reliable disaggregated memory
  using RDMA.\footnote{Our design encapsulates disaggregated memory, making it possible to replace the current RDMA implementation with a CXL-based alternative in the future.}
To do so, we assume $2f_m{+}1$
  memory nodes out of which $f_m$ can fail.
Memory nodes are part of the trusted computing
  base: they are not Byzantine and may fail by
  crashing only.
Memory nodes are simple: they just provide
  read and write functionality with access
  control.
Their size
  and functionality do not depend on the application
  being replicated, and they can be
  shared among many applications.

\section{Design} \label{sec:design}

\subsection{Overview}

\sysname follows the design of PBFT~\cite{castro1999practical}, a seminal paper that describes how to build practical BFT SMR systems.
Figure~\ref{fig:arch} depicts the architecture of \sysname.
On the left, a client sends requests to the replicas on the right and waits for responses from a majority of them. The replicas process requests in two stages. First, they establish a total order on client requests using a leader-based BFT consensus protocol. Second, they execute the ordered requests on their local instance of the replicated application and then forward the results back to the client.
To achieve microsecond-scale latency, \sysname's consensus engine uses a fast/slow path approach:
as long as the system is synchronous and all replicas collaborate, the fast path 
  orders requests without signatures.
If the fast path does not make progress, \sysname's consensus switches to the slow path, which makes progress with a mere majority of processes using signatures and disaggregated memory.

\sysname significantly 
differs from PBFT in the way it prevents equivocation.
PBFT, being a $3f{+}1$ BFT protocol, relies on intersecting quorums to ensure that malicious replicas do not make the state of honest replicas diverge.
By contrast, \sysname operates with $2f{+}1$ replicas, and therefore cannot rely on the same mechanism as PBFT;
instead, it relies on trusted disaggregated memory, which is encapsulated within a new primitive called \emph{\tcbfull} (\tcb).

\tcb is a variant of Consistent Broadcast~\cite{aguileraBGPXZ21}.
Consistent Broadcast prevents equivocation by ordering all messages broadcast by a given process.
With it, a Byzantine leader is constrained from sending different request orderings to different followers.
Our \emph{tail} variant
is a relaxation that requires correct processes to
deliver 
only the last $t$ messages sent by a correct broadcaster, while
preserving non-equivocation for all messages.
This relaxation is essential to practically bound the memory use.
Importantly, our implementation of \tcb uses a signature-less fast path to meet \sysname's latency requirements.

\begin{figure}
    \centering
    \includegraphics[width=1.0\columnwidth]{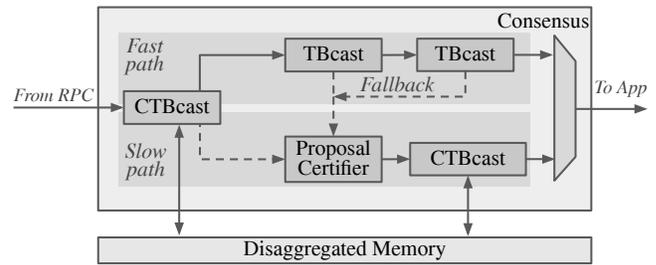}
    \caption{Overview of \sysname's consensus engine.}
    \label{fig:consensus-arch}
\end{figure}

Figure~\ref{fig:consensus-arch} depicts \sysname's consensus component with its fast/slow path design.
After receiving a request from RPC, the leader proposes its ordering via a round of \tcb.
The rest of consensus tries to turn this ordering into a globally accepted one (i.e., stable across leaders).
Depending on the synchrony of the system and the number of faulty replicas, this round of \tcb might execute either in its fast or in its slow path.
In the former case, consensus continues with its fast path and executes two rounds of \tbfull (\tb), a form of best-effort broadcast designed for finite memory
(\S\ref{sec:tcb:def}).
Importantly, none of these three broadcasts involve signatures nor disaggregated memory.
If liveness is lost during the fast path of consensus, \sysname activates its slow path, shown by the dashed arrows, which executes a certification round
  and another round of \tcb.
The slow path is also executed if the fast path of the initial \tcb fails.
Differently from the three rounds of the fast path, the three rounds of the slow path all require signatures, and \tcb invocations must access disaggregated memory.

\subsection{Challenges}

uBFT addresses the following challenges:

\paragraph{2f+1 Replicas and Finite Memory.}
Previous theoretical work~\cite{aguilera2019impact} proposed to prevent equivocation with fewer than $3f{+}1$ replicas by building Consistent Broadcast on top of shared registers.
However, this abstraction requires replicas to use infinite memory in order to store and deliver \emph{all} broadcast messages, which is not implementable in practice.
We work around this memory issue by designing \tcb, a weaker form of \cbfull where replicas are allowed to skip the delivery of old messages in order to favor the delivery of newer ones (\S\ref{sec:tcb}).

\paragraph{SMR with \tcb.}
The reliance of \sysname on \tcb brings additional complexity to its consensus algorithm, notably on preventing equivocation \emph{across} messages.
Typically, protocols rely on the entire history of messages to prevent equivocation.
Yet, \tcb only guarantees the delivery of the tail, which may lead correct replicas to have gaps in their delivery history.
\sysname works around this limitation via \emph{\tcb \summaries}, which allow a replica to make progress in spite of gaps (\S\ref{sec:consensus}).

\paragraph{Microsecond-Scale Operation.}
Systems that operate at the microsecond scale should avoid signatures on their critical path.
Yet, Aguilera \textit{et al.}~\cite{aguileraBGPXZ21} show that Consistent Broadcast cannot completely remove signatures.
Moreover, recycling memory requires the generation of cryptographic proofs which also involve signatures.
\sysname addresses this challenge by avoiding expensive cryptography in the fast path of \tcb and relegating the few bookkeeping signatures to a background task (\S\ref{sec:fast-path}).

\paragraph{Resilient Disaggregated Memory.}
\sysname relies on RDMA to implement disaggregated memory.
However, raw memory exposed over RDMA is not enough to implement our SMR protocol.
Indeed, 
RDMA-exposed memory does not tolerate failures,
and data accesses can be inconsistent, since RDMA provides only 8-byte atomicity.
\sysname addresses these limitations of RDMA using efficient, yet Byzantine-tolerant, algorithms (\S\ref{sec:implementation}).

\section{\tcbfull} \label{sec:tcb}

\emph{\tcbfull} (\tcb) is a novel variant of \cbfull (\cb) that \sysname uses to prevent equivocation.
Briefly, \tcb resembles \cb, except that it allows processes not to deliver outdated messages.
In this way, \tcb avoids maintaining the full history of messages, to bound memory use.

\subsection{Definition} \label{sec:tcb:def}

\tcb is defined in terms of two primitives, \textit{broadcast(k,m)} and \textit{deliver(k,m,p)}, 
where $k$ is a numeric identifier, $m$ is a message, and $p$ is a process.
When $p$ invokes \textit{broadcast(k,m)}, we say that $p$ broadcasts $(k,m)$, i.e., it broadcasts message $m$ with identifier $k$.
A correct broadcaster increments $k$ sequentially at every broadcast, starting with $k{=}1$.
Similarly, when a process $q$ invokes \textit{deliver(k,m,p)}, we say that $q$ delivers $(k,m)$ from $p$.

In simple terms, \tcb is a multi-shot abstraction that prevents correct processes from delivering different messages 
  from a given broadcaster $p$
  for the same identifier $k$.
\tcb is parameterized by a tail $t$, which specifies which messages are guaranteed to be delivered.
Informally, in \tcb, a correct process $q$
is only required to deliver the last $t$ messages
broadcast by a correct process $p$, while the delivery of previous messages is best-effort.

Formally, \tcb has the following properties:
\begin{description}
 \item [Tail-validity] If a correct process $p$ broadcasts $(k, m)$ and never broadcasts a message $(k', m')$ with $k' \ge k + t$, then all correct processes eventually deliver $(k, m)$.
 \item [Agreement] If $p$ and $q$ are correct processes, $p$ delivers $(k, m)$ from $r$, and $q$ delivers $(k, m')$ from $r$, then $m=m'$.
 \item [Integrity] If a correct process delivers $(k, m)$ from $p$ and $p$ is correct, $p$ must have broadcast $(k, m)$. 
 \item [No duplication] No correct process delivers $(k, *)$ from $p$ twice. 
\end{description}

The difference between \tcb and \cb lies in their validity property.
Tail-validity implies that a correct process is only obliged to deliver a message $m$ from $p$ if $m$ is among the last $t$ messages broadcast by $p$.
When $t{=}\infty$, tail-validity reduces to \cb's validity.

The infinite tail of \cb is what prevents it from recycling memory.
Indeed, given that the broadcaster cannot distinguish between network asynchrony and receiver failures~\cite{failure-detectors},
it is required to keep re-transmitting all messages until they are explicitly acknowledged.
Thus, in \cb, the broadcaster can garbage collect messages only after they have been acknowledged by all receivers.
As a result, once a single process fails, memory cannot be recycled and any correct implementation of \cb must block after running out of memory.
By not enforcing the delivery of \emph{old} messages, \tcb's tail-validity lets processes recycle the memory dedicated to these messages.
This is why \tcb requires only finite memory while \cb does not. In Section~\ref{sec:consensus}, we show that despite its weaker semantics, \tcb is sufficient for solving consensus.

\subsection{Algorithm} \label{sec:tcb:algorithm}

Algorithm~\ref{alg:tcb} implements \tcb using finite memory
with a fast/slow path approach that avoids both signatures and disaggregated memory in the common case.
For pedagogical reasons, we assume that a designated process is the broadcaster while the others are receivers.
Each receiver owns an array of $t$ Single-Writer Multiple-Reader (SWMR) regular registers.
Each register is only writable by its owner, but is readable by all processes.
The regularity of the registers forces READs that execute concurrently to a WRITE to return either the value being written or the previous one.
Moreover, each process uses a \tbfull (\tb) primitive which ensures the delivery by correct processes of the last $2t$ messages broadcast through it, 
but does not prevent equivocation.
Formally, \tb has all properties of \tcb except agreement.

Implementing \tb using finite memory is straightforward.
Briefly, the broadcaster buffers its last $2t$ messages and retransmits them until it receives acknowledgements from all receivers.
Then, to broadcast a new message when the buffer is full, the broadcaster simply makes room for it by evicting the oldest buffered message.

\begin{lstlisting}[float=ht!,caption={\tcbfull},label={alg:tcb}]
@\textit{\# at the broadcaster}@
def broadcast(k, m):
  TBcast-broadcast <LOCK, k, m> @\label{alg:tcb:lock}@
  TBcast-broadcast <SIGNED, k, m, sign((k, m))> @\label{alg:tcb:signed}@

@\textit{\# at receivers}@:
SWMR[me] = @$[(-1,\ \bot,\ \bot),\ \ldots \ ]$@ @\textit{\# array of t slots}@
delivered = @$[-1,\ \ldots \ ]$@ @\textit{\# array of t slots}@ @\label{alg:tcb:delivered-array}@
locks = @$[(-1,\ \bot),\ \ldots \ ]$@ @\textit{\# array of t slots}@ @\label{alg:tcb:locks-array}@
locked = @$[[(-1,\ \bot),\ \ldots \ ],\ \ldots \ ]$@ @\textit{\# array of t*n slots}@ @\label{alg:tcb:locked-array}@

upon TBcast-deliver <LOCK, k, m> from p: @\label{alg:tcb:upon-lock}@
  k', _ = locks[k%t] @\label{alg:tcb:read-lock}@
  if k > k':@\label{alg:tcb:check-higher-lock}@
    locks[k%t] = (k, m)@\label{alg:tcb:update-lock}@
    TBcast-broadcast <LOCKED, k, m> @\label{alg:tcb:locked}@

upon TBcast-deliver <LOCKED, k, m> from q:@$\label{alg:tcb:upon-locked}$@
  k', _ = locked[q][k%t]
  if k > k':
    locked[q][k%t] = (k, m)
    if locked[r@$_0$@][k%t] == ... == locked[r@$_{n-1}$@][k%t]:@\label{alg:tcb:unanimity}@
        deliver_once(k, m) @\label{alg:tcb:fast-delivery}@
    
upon TBcast-deliver <SIGNED, k, m, sig> from p: @\label{alg:tcb:upon-signed}@
  if valid(sig, (k, m), p):@\label{alg:tcb:valid-sig}@
    k', m' = locks[k%t]
    if k > k' or k == k' and m == m': @\label{alg:tcb:check-higher}@
      locks[k%t] = (k, m) @\label{alg:tcb:update-lock-signed}@
      SWMR[me][k%t].write((k, sig, m)) @\label{alg:tcb:copy}@
      for each (k', s', m') in SWMR[*][k%t]: @\label{alg:tcb:for}@
        if valid(s', (k', m'), p): @\label{alg:tcb:valid-check}@
          if k' == k and m' != m: @\label{alg:tcb:byzantine-check}@
            return @\textit{\# Byzantine broadcaster} \label{alg:tcb:byzantine-return}@
          if k' @$>$@ k and k' @$\equiv$@ k (mod t): @\label{alg:tcb:tail-check}@
            return @\textit{\# out of tail}@
      deliver_once(k, m) @\label{alg:tcb:slow-delivery}@
        
def deliver_once(k, m):
  if k > delivered[k%t]:@\label{alg:tcb:if-delivered}@
    delivered[k%t] = k@\label{alg:tcb:update-delivered}@
    trigger deliver(k, m)
\end{lstlisting}

As mentioned, Algorithm~\ref{alg:tcb} has a low-latency fast path that avoids signatures and disaggregated memory.
It also incorporates a fallback slow path for liveness.
For presentation simplicity, it triggers the slow path in parallel to the fast path (lines~\ref{alg:tcb:lock} and~\ref{alg:tcb:signed}), but in reality, \sysname triggers it when replicas fail to decide on new client requests after some configurable timeout.
In addition to the shared SWMR registers, receivers use three finite-size local arrays for bookkeeping (lines~\ref{alg:tcb:delivered-array}-\ref{alg:tcb:locked-array}).

In the fast path, the broadcaster first \tb-broadcasts its message alongside its identifier within a \texttt{LOCK} message (line~\ref{alg:tcb:lock}).
When receivers \tb-deliver this message (line~\ref{alg:tcb:upon-lock}),
they commit not to deliver any other message for the given identifier, and tell other receivers about their commitment by \tb-broadcasting a \texttt{LOCKED} message (line~\ref{alg:tcb:locked}).
Receivers use \texttt{locks} (lines~\ref{alg:tcb:read-lock}-\ref{alg:tcb:update-lock}) to avoid committing to different messages for the same identifier.
Importantly, receivers store only up to $t$ commitments in this array, by evicting earlier commitments that alias to the same index $k\%t$ (line~\ref{alg:tcb:update-lock}).
When receivers learn that everyone committed to the same message (lines~\ref{alg:tcb:upon-locked}-\ref{alg:tcb:unanimity}), they know that no correct replica will deliver a different message and thus deliver it (line~\ref{alg:tcb:fast-delivery}).

In the slow path, the broadcaster additionally \tb-broadcasts a signed version of its message (line~\ref{alg:tcb:signed}).
After \tb-delivering a signed message (line~\ref{alg:tcb:upon-signed}),
receivers verify its signature (line~\ref{alg:tcb:valid-sig}).
Then, they check that they have not committed to a different message for the same $k$ (line~\ref{alg:tcb:check-higher}), and ensure that they will not do so in the future (line~\ref{alg:tcb:update-lock-signed}).
Subsequently, they copy the signed message to their SWMR register associated with the message identifier $k$ (line~\ref{alg:tcb:copy}), before reading the associated SWMRs owned by other receivers (line~\ref{alg:tcb:for}).
Receivers ignore messages with invalid signatures (line~\ref{alg:tcb:valid-check}) and abort delivery if they detect a different message for the same identifier (line~\ref{alg:tcb:byzantine-check}).
In case receivers detect another message with a higher identifier that is associated with the same SWMR registers (line~\ref{alg:tcb:tail-check}), they drop their own message as it no longer belongs to the tail.
Otherwise, they deliver it (line~\ref{alg:tcb:slow-delivery}).

The correctness of the slow path of Algorithm~\ref{alg:tcb} hinges on that all correct processes will find the message copied by the \emph{fastest} correct replica when reading the registers.
Thus, they can deliver no message other than the one this first replica copied, thereby preserving agreement.
The fast and slow paths are linked together via the \texttt{locks} array (lines~\ref{alg:tcb:update-lock} and ~\ref{alg:tcb:update-lock-signed}), which ensures that whichever path executes first forces the value of the message for the other path.
Note that messages broadcast by Byzantine processes might never be delivered by correct processes, as allowed by the specification. A more detailed proof of correctness is given in 
\ifextended
Appendix~\ref{app:correct-TCB}.
\else
the extended version of this paper~\cite{ubft-extended}.
\fi
\section{State-Machine Replication} \label{sec:consensus}

Like many leader-based BFT consensus protocols,
  \sysname's protocol has the same high-level
  layout as PBFT~\cite{castro1999practical}: it shares
  naming conventions and
  splits the protocol in similar phases.
However, \sysname has different goals
  ($2f{+}1$ fault tolerance, finite memory, microsecond
  latency), different assumptions (disaggregated
  memory), and a different non-equivocation mechanism
  (\tcbfull (\tcb)).
Consequently, while the protocol retains a PBFT-like
    conceptual organization, its internal structure and
    mechanics differ substantially from those of PBFT, as we
    describe in this section.

We start by giving an overview of our consensus protocol.
Then, we analyse its slow path and explain how it deals with finite memory and view changes.
Finally, we describe how its fast path leverages unanimous and timely collaboration of replicas to achieve microsecond-scale latency.
For presentation clarity, this section gives only an informal description of our protocol. Its pseudocode is given in
\ifextended
Appendix~\ref{app:correct-consensus} alongside detailed arguments of its correctness.
\else
Section~\ref{sec:consensus-algorithm} and detailed correctness arguments are given in the extended version of this paper~\cite{ubft-extended}.
\fi

\subsection{Basic Protocol}

\begin{figure}
    \centering
    \includegraphics[width=\columnwidth]{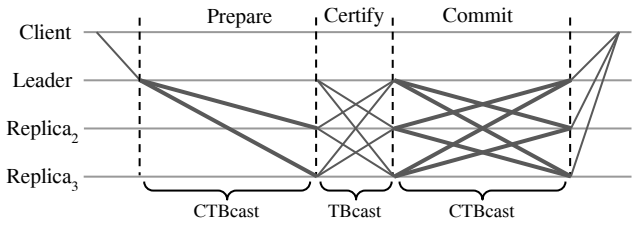}
    \caption{Communication pattern of \sysname's slow path. Bold lines represent messages sent over \tcb. Thinner lines represent direct messages.}
    \label{fig:consensus-slow-path}
\end{figure}

From a high-level point of view, the slow path of our consensus protocol---shown in Figure~\ref{fig:consensus-slow-path}---has three phases: \emph{Prepare}, \emph{Certify}, and \emph{Commit}.
After the leader receives a signed request from a client, it \emph{proposes} it by broadcasting a \texttt{PREPARE} message via \tcb.
When replicas (including the broadcaster) deliver this message, they proceed to its certification: each replica (1) signs the proposal, (2) \tb-broadcasts its signature in a \texttt{CERTIFY} message, and (3) aggregates $f{+}1$ signatures into an unforgeable proof that the proposal was emitted by the leader.
This proof is critical during a leader change~(\S\ref{sec:view-change}) to force the new leader to propose no other command than one that might have already been applied.
The goal of the subsequent Commit phase is to guarantee that, if a request is applied, a proof of its proposal exists and will survive a leader change (unless garbage collected beforehand,~\S\ref{sec:smr-non-eq}).
To achieve this, replicas first \tcb-broadcast a \texttt{COMMIT} message containing the proof.
Once a replica delivers $f{+}1$ such \texttt{COMMIT} messages,
it knows that no future leader can emit a conflicting proposal~(\S\ref{sec:view-change}), and can safely apply the client's request to its local state machine and respond to the client.
Note that we use \tb instead of the more expensive \tcb for broadcasting \texttt{CERTIFY} messages: because each certificate involves at least one correct replica and all correct replicas certify the same proposal---thanks to \tcb in the Prepare phase---it is impossible to build certificates for different proposals, even if some Byzantine replicas equivocate \texttt{CERTIFY} messages.

The protocol described so far replicates a single client request.
Similarly to PBFT, \sysname uses a sliding window to run its consensus protocol across a series of slots.
As the leader receives multiple requests, it proposes each in a different slot, enabling many slots to progress in parallel.
\sysname also adopts PBFT's application checkpoints to limit the number of concurrent proposals.
This mechanism constrains the influence of a Byzantine leader and bounds the number of relevant messages at any time.
An application checkpoint is signed by $f{+}1$ replicas and includes (1) the application state after applying the first $i$ ordered requests, and (2) an implicit authorization to work on slots $[i+1, i+\t{window}]$.

\subsection{Non-Equivocation at the Consensus Level}
\label{sec:smr-non-eq}

An important aspect of \sysname's consensus protocol is how it prevents the leader from issuing conflicting proposals.
\tcb only partially addresses this problem:
it prevents a Byzantine leader from sending conflicting proposals with the same \tcb identifier, but it does not prevent conflicting proposals from being spread using different \tcb identifiers.

Conceptually, a process can verify that another process has not equivocated only if it knows the complete history of messages broadcast by that process.
For this reason, our protocol requires processes to interpret messages from other processes in FIFO order.
However, \tcb does not guarantee delivery of all broadcast messages because of its tail-validity property, which may prevent processes from delivering \emph{all} messages from a correct process in FIFO order.
We address this limitation by pairing \tcb with \emph{\tcb \summaries}.

A \tcb \summary is an unforgeable synopsis of the messages broadcast by a process $p$ via \tcb up to a given \tcb identifier.
Summaries serve as certificates signed by $f{+}1$ replicas that have witnessed $p$'s broadcasts and attest that $p$ has not equivocated at the consensus level.
Once a process receives a \summary for $p$ up to some \tcb identifier $i$, it can safely apply it and continue processing $p$'s subsequent messages with identifiers greater than $i$ in FIFO order.
In essence, \tcb \summaries restore FIFO delivery guarantees that may be violated due to the tail-validity of \tcb.

\tcb \summaries are generated interactively.
For every $t$ consecutive \tcb messages that a replica $r$ delivers from $p$ ($t$ is \tcb's tail parameter), $r$ participates in creating a \summary that captures the state of $p$.
Replica $p$ tracks how many messages it has broadcast and blocks, every $t$ messages, waiting for a \summary of its own state.\footnote{To avoid blocking-induced latency hiccups, our implementation generates summaries every $t/2$ messages, effectively double-buffering the tail so that broadcasting can continue while summaries are being produced.}
Using its previous \summary, $p$ can bring correct processes that missed some of its messages up to date, and help them deliver the last $t$ messages it has broadcast so far.
Consequently, even if up to $f$ Byzantine replicas fail to assist in building $p$'s next \summary, all correct replicas will eventually collaborate to generate it.
This guarantees the liveness of the \summaries and allows $p$ to continue broadcasting.

Ensuring that \tcb \summaries have bounded size is crucial.
To achieve this, each process retains only a limited number of messages that it has \tcb-delivered from other processes.
This bound is enforced by maintaining, for every broadcaster, a sliding window of consensus slots.
\sysname requires processes to \tcb-broadcast application checkpoints, which allow receivers to advance these windows.
When a receiver slides a broadcaster's window forward, it discards all messages that refer to slots falling outside the window, as the corresponding consensus slots have already been checkpointed.
As a result, the relevant state of a broadcaster consists solely of a finite consensus window range and the finite collection of \tcb messages that fall within it.

\subsection{View Change}
\label{sec:view-change}

To tolerate faulty leaders, \sysname uses a PBFT-inspired \emph{view change} mechanism.
Execution is divided into views, each with a designated leader selected in a round-robin fashion.
When a replica observes no progress or suspects the leader of censoring requests, it
moves to the next view by \tcb-broadcasting a \texttt{SEAL\_VIEW} message.

Before the leader of a view can propose new requests, it must transfer requests that may have been applied in previous views.
It does so by \tcb-broadcasting a \texttt{NEW\_VIEW} message that summarizes the system state after the last stable checkpoint.

The \texttt{NEW\_VIEW} message contains the latest application checkpoint and, for each open consensus slot following this checkpoint, the relevant \texttt{COMMIT} messages from $f{+}1$ replicas.
This information is assembled as follows.
When a replica $p$ receives a \texttt{SEAL\_VIEW} message from replica $q$, $p$ generates a certificate share attesting to $q$'s state.
This state includes $q$'s latest checkpoint and the most recent \texttt{COMMIT} message sent by $q$ for each of its open slots.
The leader collects $f{+}1$ matching certificate shares for $f{+}1$ distinct replicas and includes them in the \texttt{NEW\_VIEW} message.

The \texttt{NEW\_VIEW} message constrains the proposals of the new leader.
For each open slot, the leader must propose the request contained in the \texttt{COMMIT} message with the highest view number, if such a message exists.
Only slots without any \texttt{COMMIT} message may be assigned new requests.

This mechanism prevents a faulty leader from overwriting previously applied requests.
If a correct replica applied a request in some view, it must have received $f{+}1$ matching \texttt{COMMIT} messages.
Consequently, when the new leader gathers certificates about $f{+}1$ replicas, at least one certificate must contain a \texttt{COMMIT} message for each applied request.
As a result, no applied request can be overwritten in the new view.

\subsection{Fast Path}
\label{sec:fast-path}

\begin{figure}
    \centering
    \includegraphics[width=\columnwidth]{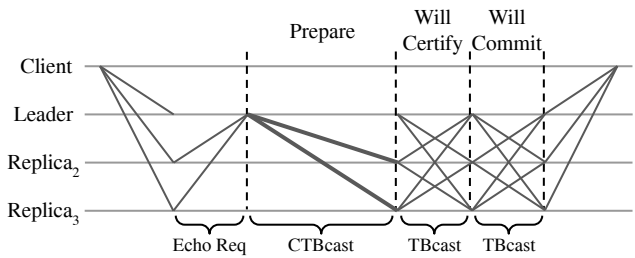}
    \caption{Communication pattern of \sysname's fast path. Bold lines represent messages sent over \tcb. Thinner lines represent direct messages.}
    \label{fig:consensus-fast-path}
\end{figure}

To operate at the microsecond scale, 
\sysname incorporates a fast path---shown in Figure~\ref{fig:consensus-fast-path}---that moves signatures out of the critical path in times of synchrony and unanimous collaboration.
Similar to the slow path, the fast path has three phases: Prepare, which is common with the slow path but executes the fast path of \tcb, followed by \textit{WillCertify} and \textit{WillCommit}.
The last two phases replace the Certify and Commit phases of the slow path with inexpensive rounds of \tbfull.

\sysname's fast path is simple.
For a given consensus slot, after the end of the Prepare phase,
replicas broadcast a \texttt{WILL\_CERTIFY} message and wait to receive the same message from all others.
Once received, they proceed with broadcasting a \texttt{WILL\_COMMIT} message and again wait for unanimity before applying the proposed request.
Both messages contain solely the view number and the consensus slot.
These messages are essentially promises that their broadcasters will run the slow path before \tcb-broadcasting their next \texttt{SEAL\_VIEW} messages.
By broadcasting \texttt{WILL\_CERTIFY}, a replica promises to participate in certifying the \texttt{PREPARE} message.
With \texttt{WILL\_COMMIT}, it promises to \tcb-broadcast the resulting certificate within a \texttt{COMMIT} message.

The safety of this scheme is intuitive.
If a replica receives $2f{+}1$ \texttt{WILL\_CERTIFY} messages, it knows that at least $f{+}1$ correct replicas will certify the \texttt{PREPARE} message in the Certify phase.
Similarly, receiving $2f{+}1$ \texttt{WILL\_COMMIT} messages means that at least $f{+}1$ correct replicas will send a \texttt{COMMIT} message in the Commit phase and thus no other request will be applied for this given slot.

The fast path also takes care of finite memory as it does not keep promises forever: it drops the promises that refer to consensus slots included in an application checkpoint.
These checkpoints, along with the \tcb \summaries, make up the required background signatures of \sysname's fast path.

Lastly, the fast path eschews signatures between clients and replicas by having clients send unsigned requests to all replicas, rather than a signed request to the leader.
A replica endorses a \texttt{PREPARE} message for a client request only if it has also received that request directly from the client.
However, a Byzantine client could send its request only to the leader, causing the proposal to stall and potentially triggering a view change.
To prevent this scenario, the fast path introduces an additional communication round (denoted \textit{Echo Req} in Figure~\ref{fig:consensus-fast-path}), in which the leader waits for followers to echo the client request before proposing it.
This ensures that followers have received the request and will participate in certifying the corresponding \texttt{PREPARE} message, provided it is delivered in a timely manner.
\section{Implementation}\label{sec:implementation}
Our implementation of \sysname leverages disaggregated memory to build the shared registers used by \tcbfull, and a fast networking fabric to build point-to-point links with microsecond-scale latency.
This section explains how we use RDMA to achieve these goals.

\subsection{Reliable SWMR Regular Registers}
\label{sec:implementation-registers}

The \tcbfull primitive used by \sysname requires reliable, SWMR, regular registers in its slow path.
\textit{Reliable} means that the registers do
not fail, i.e., \texttt{READ}s and \texttt{WRITE}s always complete.
\textit{SWMR} means Single-Writer Multiple-Reader:
each register has an owner, which is a replica that is
allowed to write to it, while all other replicas
may only read the register.
\textit{Regular} means that, when a \texttt{READ} executes concurrently to a \texttt{WRITE}, the former should return either the value that is being written or the previous one.
We explain below how to
  build registers with such properties
  on top of memory nodes exposing their
  memory over RDMA (Figure~\ref{fig:swmr}).

\paragraph{SWMR Register.}
We implement SWMR semantics using RDMA permissions.
RDMA splits memory into regions, each with different access permissions based on tokens.
We simply create a read-write and a read-only region for
  the same memory range, give the read-write token
  to the writer of the register, and the read-only
  token to the other replicas.

\begin{figure}
    \centering
    \includegraphics[width=1.0\columnwidth]{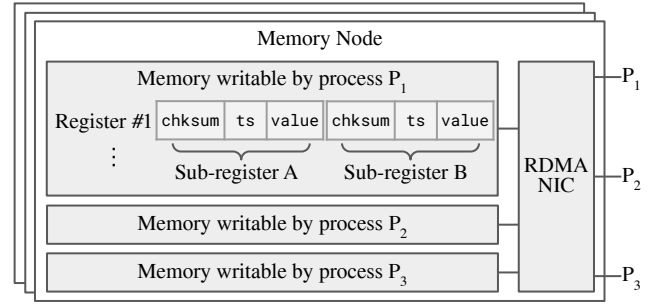}
    \caption{Reliable SWMR Regular registers using RDMA.}
    \label{fig:swmr}
\end{figure}

\paragraph{Regular Register.}
RDMA-exposed memory is atomic (hence regular), but
  only at an 8-byte granularity.
Thus, an RDMA \texttt{READ} concurrent with a \texttt{WRITE} may return partially written data,
  mixing old and new values.
To detect this problem, we use
checksums, as in Pilaf~\cite{mitchell2013using}.
A simple approach, where a reader retries until
  the checksum is valid, violates liveness as
a Byzantine writer can write bogus checksums.
To avoid such scenario, we follow an evolved double-buffering strategy, which ensures that a reader is always able to find a complete \texttt{WRITE} or detect the owner of the register as being Byzantine, as explained below.
As depicted in Figure~\ref{fig:swmr}, each register is made of two sub-registers.
Each \texttt{WRITE} to a given register is directed to one of the sub-registers in a round-robin manner.
To write a value, the writer prefixes it with a logical timestamp (denoted \texttt{ts}) and a checksum. 
Importantly, the writer waits for $\delta$ (the known communication bound after GST) between two \texttt{WRITES} to the same register to give readers time to find a complete \texttt{WRITE}.
To perform a \texttt{READ}, the reader reads both sub-registers at once and, out of the values with a valid checksum, returns the one with the highest timestamp.
\footnote{
The hardware is allowed to reorder RDMA \texttt{READ}s following RDMA \texttt{WRITE}s when issued to different Queue Pairs (QPs)~\cite{rdma-ordering}.
To ensure regularity, i.e., that subsequent register \texttt{READ}s see the RDMA-written value, a register \texttt{WRITE} only returns after the PCIe \texttt{WRITE} transaction reaches the last-level cache (L3).
We do so by issuing an RDMA \texttt{READ} after the RDMA \texttt{WRITE} to the same QP---which acts as a PCIe fence~\cite{rdma-read-flush}---and only considering that the register \texttt{WRITE} completes when the RDMA \texttt{READ} completes.
}
If both checksums are invalid and the \texttt{READ} 
  took less than $\delta$, then the writer is Byzantine (i.e., it did not respect the $\delta$ cooldown, or wrote bogus data), so a default value is returned.
  The writer is also deemed Byzantine if both sub-registers have the same timestamp.
If the \texttt{READ} took more than $\delta$, it is retried.
This scheme works in eventually synchronous systems with bounded clock drift~\cite{ukharon, farm2}: after GST~(\S\ref{sec:background:model}), RDMA \texttt{READ}s take less than $\delta$, so they cannot overlap with \texttt{WRITE}s on both sub-registers, so at least one sub-register will eventually have a valid checksum (assuming the writer is correct), and timestamp-based ordering ensures regular register semantics.

\paragraph{Reliable Register.} 
We replicate each register to $2f_m{+}1$ memory nodes (Figure~\ref{fig:swmr}).
Here, $f_m$ is the maximum number of memory
  nodes that may crash.
While memory nodes add to the total number of
  replicas in the system, these nodes
  do not replicate the application, and they can be shared among many replicated applications, as each application takes little memory (\S\ref{eval:mem}).
Our register replication scheme is straightforward.
\texttt{WRITE}s are issued to all memory nodes in parallel and return after having completed at $f_m{+}1$ of them.
\texttt{READ}s are also issued to all nodes in parallel, wait for $f_m{+}1$ of them to complete, and return the value of the regular register with the highest timestamp.
Because \texttt{READ}s and \texttt{WRITE}s always complete at a majority, this scheme is trivially live.
Moreover, intersecting quorums guarantee that \texttt{READ}s intersect with the last completed \texttt{WRITE} and/or with a concurrent \texttt{WRITE} (if any) at some register(s).
Thus, \texttt{READ}s return a value no older than the value of the last completed \texttt{WRITE}, thereby preserving regularity.

\subsection{A Fast Message-Passing Primitive}

\begin{figure}
    \centering
    \includegraphics[width=\columnwidth]{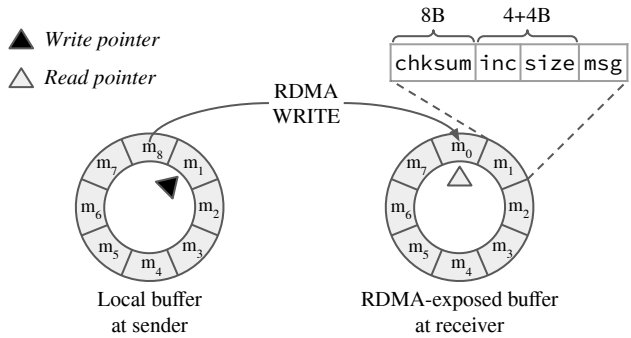}
    \caption{\sysname's RDMA-powered message-passing primitive.}
    \label{fig:p2p}
\end{figure}

To achieve microsecond-scale communication, \sysname implements a fast, one-way messaging primitive between a sender and a receiver where, similarly to \tcb, the receiver is required to deliver only the last $t$ messages sent. This primitive admits an implementation without receiver acknowledgements, which we found to be important for achieving microsecond-scale performance.

Figure~\ref{fig:p2p} depicts the implementation of this primitive over RDMA.
The receiver has a circular buffer
  exposed over RDMA; the buffer is divided into $t$ slots of equal size large enough for the largest message.
Briefly, the sender RDMA-writes messages to the receiver's buffer, while the latter scans its local buffer for new messages.
There are no acknowledgments:
the sender overwrites old messages with newer ones, even if they were not yet delivered.

We now explain this implementation in detail.
The sender allocates a mirror image of the receiver's buffer in its local memory, and maintains locally a \emph{write pointer} to the slot for its next message.
Each slot has a header composed of a checksum, an incarnation number (the number of times it was written), and a message size.
To send a new message, the sender writes it to the slot pointed by the writer pointer and fills its header.
Then, it issues an RDMA \texttt{WRITE} to the corresponding slot in the receiver's memory, and marks the slot as unavailable until it is notified of the completion of the \texttt{WRITE} by the RDMA NIC.
Finally, the sender advances its write pointer.
If the pointed slot is unavailable, the new message is queued in a second (not depicted) circular buffer.
This buffer acts as a staging area: it forwards its messages for transmission when slots become available, and evicts the oldest queued message to accommodate a newer one.

The receiver maintains a \emph{read pointer} to the slot where it will read the next message.
The receiver polls this slot for a particular incarnation number, which identifies the next message it expects to find.
Once this incarnation number is seen, the receiver copies the entire message to a private buffer in order to avoid interfering \texttt{WRITE}s on the same slot.
Then, the receiver checks the incarnation number again in the copied message.
If the incarnation number has not changed, the sender verifies the checksum before delivering the message (rescheduling the polling if the checksum is invalid).
If the receiver finds a higher incarnation number than expected, 
it concludes that some older messages may exist in other slots of the buffer that will have to be delivered first.
So, the receiver aborts the delivery and advances its pointer to the oldest undelivered message.
With this strategy, the receiver guarantees FIFO delivery of the last $t$ messages.

This scheme has two benefits: it uses practically bounded memory and avoids acknowledgements.
The latter---even when batched---increase the application's tail latency as scheduling an acknowledgement alone takes $\approx$300\,ns~\cite{kalia2014using}, which is time lost handling incoming events.
Instead, by the End-to-End Principle~\cite{end-to-end}, acknowledgements are piggybacked in SMR-level messages.

\section{Evaluation} \label{sec:evaluation}

We evaluate the various performance traits of \sysname and verify its suitability as a BFT SMR system for microsecond applications. 
We aim to answer the following:
\begin{itemize}
    \item How much latency does \sysname induce on the replicated applications (\S\ref{eval:eelat})?
    \item How does the replication latency of \sysname compare to other SMR systems (\S\ref{eval:replat})?
    \item How do the internal components of \sysname contribute to its end-to-end latency (\S\ref{eval:latbreak})?
    \item How does our implementation of \tcb perform in comparison to SGX-based non-equivocation mechanisms (\S\ref{sec:exp-non-eq})?
    \item How does the tail parameter of \tcb impact \sysname's tail-latency (\S\ref{sec:tail-eval-latency})?
    \item What is the memory consumption of \sysname (\S\ref{eval:mem})?
\end{itemize}

\begin{table}
 \small
 \centering
    \caption{Configuration details of machines.}
\label{table:hwspecs}
	\begin{tabular}{cl}
	\toprule
\textbf{CPU}		& 2$\times$ 8c/16t Intel Xeon Gold 6244 @ 3.60\,GHz \\
\textbf{NIC/Switch}		& Mlnx CX-6 MT28908 / MSB7700 EDR 100 Gbps \\
\textbf{Software} & Linux 5.4.0-74-generic / Mlnx OFED 5.3-1.0.0.1 \\
	 \bottomrule
	\end{tabular}
\end{table}

Our testbed is a cluster with 4 servers configured per Table~\ref{table:hwspecs}.
The dual-socket machines have an RDMA NIC attached to their first socket.
Our experiments execute on cores of the first socket using local NUMA memory.
Our implementation~\cite{ubft-zenodo} measure time using TSC~\cite{tsc} via \t{clock\_gettime} with the \t{CLOCK\_MONOTONIC} parameter.
All machines are connected to a single switch.
Due to limited hardware availability, servers act as both compute and memory nodes.

In all experiments, we deploy 1 client and 3 replicas, we replicate SWMR registers to 3 memory nodes, and we take at least 10,000 measurements.
Additionally, we set the consensus window to 256 requests and---unless stated otherwise---the tail parameter of \tcb to 128 messages.

\paragraph{Applications.}
We integrate \sysname with MemCached~\cite{kalia2014using}, Redis~\cite{redis} and Liquibook~\cite{liquibook}.
MemCached and Redis are non-replicated high-performance key-value stores.
Liquibook is a financial order matching engine.
We also integrate all the aforementioned applications with Mu~\cite{aguilera2020microsecond}, the SMR system the lowest replication latency to date but that tolerates only crash faults.
In all applications, the client sends messages using \sysname's RPC mechanism.
Additionally, using a no-op application, we compare \sysname against MinBFT~\cite{VeroneseCBLV13}, a state-of-the-art $2f{+}1$ BFT SMR system with
a publicly available implementation based on Intel's SGX~\cite{sgx-explained}.
SGX provides a secure CPU enclave for executing arbitrary code,
thus offering a general-purpose trusted computing base.

\paragraph{Implementation Effort.}
We implemented \sysname on top of a framework~\cite{ukharon} that facilitates the use of RDMA.
Our prototype spans 11,750 lines of C++, out of which 2,966 are dedicated to \tcbfull.
The prototype includes all features on the critical path of a complete implementation: the only major unimplemented features are application and replica state transfers.
We use Dalek's implementation of EdDSA~\cite{dalek} for public-key cryptography, BLAKE3~\cite{blake3} for HMACs, and xxHash~\cite{xxhash} for checksums.

\subsection{End-to-End Application Latency}
\label{eval:eelat}

\begin{figure}
    \centering
    \includegraphics[width=\columnwidth]{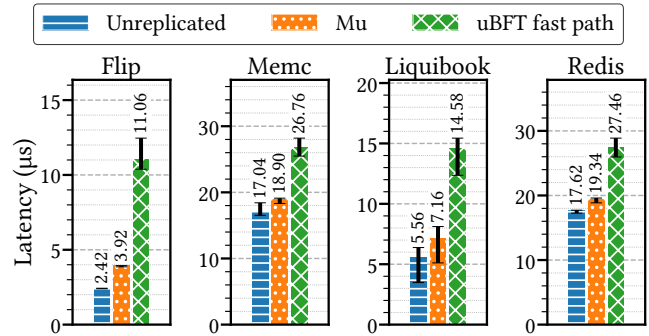}
    \caption{End-to-end latency of different applications when either not replicated or replicated via Mu and \sysname's fast path. Printed values show the 90th percentiles; whiskers indicate the 50th and 95th percentiles.}
    \label{fig:replicated-apps-latency}
\end{figure}

We first explore the replication overhead that \sysname induces to end applications.
We compare the latency of its fast path against unreplicated execution and replication via Mu.
We study four applications: Flip, a toy application that reverses its input, as well as Memcached, Redis, and Liquibook.

Flip uses 32\,B requests and responses. The key-value stores use 16\,B keys and 32\,B values; their workloads consist of 30\% \texttt{GET} operations, 80\% of which return a value. Liquibook requests are 32\,B, and responses range from 32\,B to 288\,B depending on the number of matched orders; 50\% of requests are \texttt{BUY} operations and 50\% are \texttt{SELL} operations.

Figure~\ref{fig:replicated-apps-latency} shows the results. \sysname is approximately 7.5\,\us slower than Mu at the 90th percentile across applications. The relative overhead is largest for ultra-low-latency apps such as Flip, where \sysname is about $3\times$ slower than Mu. As application latency increases, the relative overhead decreases: \sysname is about $2\times$ slower for Liquibook and roughly $1.5\times$ slower for the key-value stores.

\sysname also slightly increases latency variance (the gap between the 50th and 95th percentiles) compared to Mu. This is primarily due to additional communication required by \sysname: the RPC layer performs an extra round to ensure that all correct replicas receive the client request, and the replication protocol requires four broadcast rounds before replying to the client, compared to a single majority \texttt{WRITE} in Mu. Overall, \sysname's fast path introduces four additional communication rounds relative to Mu, which explains the higher tail latency. Nevertheless, the additional variance remains below 3\,\us.

Overall, \sysname enables microsecond-scale applications to obtain Byzantine fault tolerance with at most about 10\,\us of overhead when the network is timely and replicas operate synchronously.

\subsection{End-to-End Replication Latency}
\label{eval:replat}

We evaluate \sysname's replication latency across varying request sizes using a no-op application---whose reply size matches the request---when replicated via \sysname. We compare against three other configurations: no replication, Mu, and MinBFT, an SGX-based $2f{+}1$-replica BFT alternative.

Figure~\ref{fig:noop-latency} shows the median end-to-end latency.
As expected, the lowest latency is achieved without replication (denoted \textit{Unrepl.}). In this configuration, end-to-end latency ranges from 2.2\,\us to 20\,\us and primarily reflects the cost of communicating with the server using our RPC mechanism.

\begin{figure}
    \centering
    \includegraphics[width=\columnwidth]{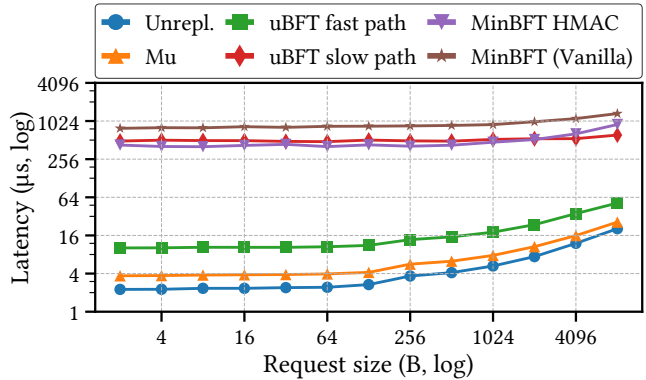}
    \caption{Median end-to-end latency for different request sizes of an unreplicated no-op application, as well as its latency when replicated with Mu, \sysname and MinBFT.}
    \label{fig:noop-latency}
\end{figure}

Replicating via Mu increases end-to-end latency by up to 64\% for small requests and by at most 26\% for 8\,KiB requests. In the absence of failures, Mu's leader replicates requests by RDMA-writing them to its followers.

\sysname's fast path exhibits higher latency than Mu due to four additional communication rounds, but increases latency by at most 175\% relative to Mu while providing Byzantine fault tolerance.

One might expect \sysname's fast path to come at the cost of high slow-path latency; however, this is not the case. We compare our slow path against MinBFT in two configurations. In its vanilla configuration, MinBFT uses HMACs between replicas, while clients sign requests using public-key cryptography, leading to a minimum end-to-end latency of 566\,\us. We modify MinBFT to also use SGX at the clients, replacing public-key cryptography with HMACs. As a result, \sysname's slow path is faster than vanilla MinBFT (by up to 52\%) and at most 24\% slower than the HMAC-only variant, despite relying on public-key cryptography.

Note that MinBFT is not RDMA-optimized and uses TCP in its implementation. To improve comparability, we used Mellanox's VMA library~\cite{vma} to replace MinBFT's TCP stack with a kernel-bypass alternative that leverages RDMA NICs. Moreover, because our setup does not support SGX, we emulated it using the latency results from Section~\ref{sec:exp-non-eq}.

Overall, these results show that \sysname achieves Byzantine fault tolerance with moderate latency overhead on the fast path while maintaining competitive slow-path performance.

\subsection{Latency Breakdown}
\label{eval:latbreak}

\begin{figure}
    \centering
    \includegraphics[width=\columnwidth]{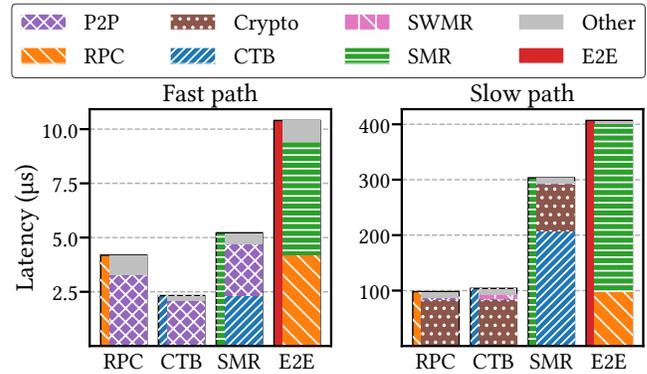}
    \caption{Recursive decomposition of the end-to-end latency of \sysname's fast and slow path when replicating Flip with requests of 8\,B.}
    \label{fig:stacked-latency}
\end{figure}

To better understand \sysname's latency, we break down the latency of its internal components.

Figure~\ref{fig:stacked-latency} shows a recursive decomposition of the latency of an 8\,B Flip request into its main components: remote-procedure call, \tcbfull, and replication (denoted \textit{\texttt{RPC}}, \textit{\texttt{CTB}}, and \textit{\texttt{SMR}}, respectively). The rightmost columns show the end-to-end client-perceived latency (denoted \textit{\texttt{E2E}}). Each bar is split into two regions: the narrow left region shows the total latency of the component, while the wider right region shows its internal decomposition.

We identify four primitive sources of latency. The first is communication over our point-to-point messaging primitive (denoted \textit{\texttt{P2P}}). The second is signature generation and verification (denoted \textit{\texttt{Crypto}}),\footnote{The \texttt{Crypto} category includes not only cryptographic computation but also synchronization costs, such as dispatching operations to a thread pool and retrieving the results.} and the third is accessing disaggregated memory registers (denoted \textit{\texttt{SMWR}}). The latter two are only relevant in \sysname's slow path. Finally, the \textit{\texttt{Other}} category captures glue logic between components, including buffer copies and delays between the arrival and processing of asynchronous events.

In the fast path, most of the latency is due to communication. Given the small message sizes, improving end-to-end latency would primarily require either reducing the number of communication steps or lowering network latency.

In the slow path, public-key cryptography dominates latency. Because signatures are unavoidable in our setting~\cite{aguileraBGPXZ21}, improving slow-path latency largely depends on faster cryptographic primitives. Conversely, the cost of accessing disaggregated memory (part of \texttt{CTB}) is comparatively small, accounting for 14\,\us (3.5\%) of the end-to-end latency.

\subsection{Latency of Non-Equivocation}\label{sec:exp-non-eq}

Figure~\ref{fig:stacked-latency} shows that \tcbfull (\tcb) accounts for a substantial portion of the latency of \sysname's fast and slow paths.
This section quantifies the performance of our \tcb implementation against the de facto method for preventing equivocation in modern systems~\cite{VeroneseCBLV13,behl2017hybrids,fastbft}, namely a trusted counter implemented on Intel SGX.

\begin{figure}
    \centering
    \includegraphics[width=\columnwidth]{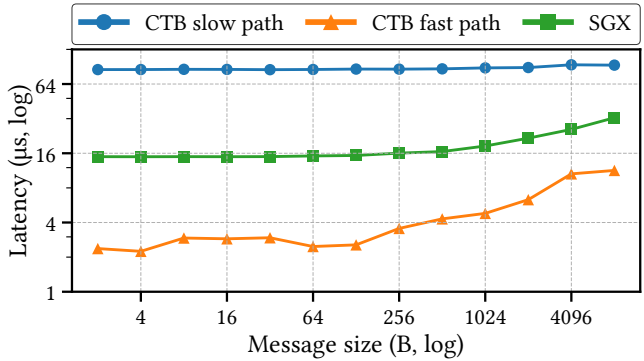}
    \caption{Median latency of multiple non-equivocation mechanisms for different message sizes.}
    \label{fig:anti-equivocation-latency}
\end{figure}

Non-equivocation mechanisms based on trusted counters securely bind a monotonically increasing sequence number to each broadcast message. 
Briefly, each enclave stores a local $counter$ and a shared $secret$. Before sending a message, a process feeds it to its enclave, which returns a proof of non-equivocation of the form $HMAC_{secret}(msg \| counter{+}{+} \| process~id)$. 
Recipients verify the HMAC using their own enclave, ensuring that each message is uniquely identified and preventing equivocation. 
Because authentication occurs entirely within the enclave, only symmetric cryptography is needed, avoiding expensive public-key operations.

Figure~\ref{fig:anti-equivocation-latency} shows the median latency of the two non-equivocation mechanisms between a sender and two receivers.
The lowest latency is achieved by the fast path of our \tcb implementation, ranging from 2.2\,\us to 11\,\us depending on message size. 
In this path, \tcb leverages unanimous and timely participation of all processes, thereby avoiding signatures and preventing equivocation using just two rounds of \tbfull.
The slow path, triggered under failures, relies on public-key cryptography, which dominates latency and raises it to roughly 86\,\us.

Preventing equivocation using SGX requires accessing the enclave both at the sender and the receiver, in addition to broadcasting the message. 
Since our RDMA experimental setup lacks SGX, we approximate its cost using a machine with an Intel i7-7700K CPU (4.2\,GHz, 0.6\,GHz higher than our RDMA setup). 
Each enclave access takes 7--12.5\,\us, resulting in a minimum SGX-based latency of about 16\,\us, shown as the middle line in Figure~\ref{fig:anti-equivocation-latency}.

For both non-equivocation approaches, latency grows linearly with message size due to hashing and communication costs. 
\tcb's fast path is up to $6.5\times$ faster than the SGX solution, benefiting from ultra-low-latency communication.

\subsection{Impact of \tcb's Tail on Tail Latency}
\label{sec:tail-eval-latency}

We now study how the size of the tail in \tcb (parameter $t$) affects the client's tail latency.
We focus on \sysname's fast path and execute Flip with small 64\,B requests and larger 2\,KiB ones.
For both request sizes, we explore four \textit{tail} parameters.

Figure~\ref{fig:tail-memory-latency} shows the results.
For smaller values of $t$, we see a latency spike
indicative of thrashing as we move to higher
percentiles.
This spike occurs because
  \tcb uses a double-buffering mechanism with cryptographic
  \summaries (\S\ref{sec:smr-non-eq}) to switch between them; if both
  buffers fill before a \summary occurs
  (due to a small $t$), \tcb stalls.
The smaller the $t$, the sooner the buffers fill,
  the more often \tcb stalls, and
  hence the lower the percentile of
  the spike.
For small requests, a tail $t{=}128$ avoids thrashing up to the 99th percentile.
For larger requests, $t{=}64$ suffices, as filling the buffers takes more time, giving more time for the \summary to occur.

\begin{figure}
    \centering
    \includegraphics[width=\columnwidth]{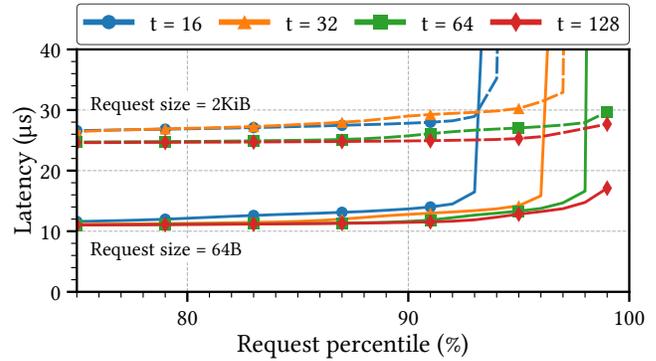}
    \caption{\sysname's tail latency for different \tcb tails $t$ for 2\,KiB requests (top) and 64\,B requests (bottom).}
    \label{fig:tail-memory-latency}
\end{figure}

\subsection{Memory Consumption}
\label{eval:mem}

Given the fundamental goal of \sysname to operate using finite memory,
we monitor the
consumption of disaggregated memory and
the local memory consumption at the leader replica, while re-running the experiment of Section~\ref{sec:tail-eval-latency}.

\begin{table}[b]
\caption{\sysname replica (top) and disaggregated (bottom) memory usage for different \tcb tails $t$ and request sizes.}
\vspace{3pt}
\label{table:memory}
\centering
\begin{tabular}{|c|c|c|c|c|}
\hline
\textbf{Request size} & \textbf{t = 16} & \textbf{t = 32} & \textbf{t = 64} & \textbf{t = 128} \\ \hline
64\,B  & 0.46\,GiB & 0.47\,GiB & 0.49\,GiB & 0.53\,GiB \\ \hline
2\,KiB & 4.3\,GiB  & 4.5\,GiB & 4.8\,GiB  & 5.5\,GiB  \\ \hline \hline
\textbf{Disag. Mem.} & 20\,KiB  & 40\,KiB & 81\,KiB  & 162\,KiB  \\ \hline
\end{tabular}
\end{table}

Table~\ref{table:memory} summarizes the results for two different request sizes (64\,B and 2\,KiB) and four different $tail$ parameters of \tcb (16, 32, 64, and 128).
For small 64\,B requests, the local memory consumption starts at 0.46\,GiB.
This is the entire memory that \sysname preallocates at startup and uses during its lifetime.
When \tcb's tail $t$ increases, \sysname's memory consumption increases linearly by $\approx$1\,MiB for each additional message in the tail.
For large 2\,KiB requests, the memory consumption starts at 4.3\,GiB ($t{=}16$) and increases at a rate of $\approx$11\,MiB per message.

\sysname consumes little disaggregated memory.
The last row of Table~\ref{table:memory} shows the memory used at a single memory node.
This amount is independent of the size of requests and depends only on \tcb's tail $t$:
  messages sent over \tcb are transmitted using our fast message-passing primitive;
  upon receiving a message, a receiver writes to disaggregated memory only the message's id and its fingerprint (a 32\,B cryptographic hash); the register
  implementation (\S\ref{sec:implementation-registers}) stores two copies, each with an 8\,B checksum.
To save space, registers use the identifiers of messages as their timestamps.

\section{Related Work} \label{sec:related}

\sysname uses RDMA to instantiate disaggregated memory. Prior work identifies some benefits and downsides of using RDMA in an adversarial environment. Aguilera \textit{et al.}~\cite{aguilera2019impact,aguileraBGPXZ21}
propose new RDMA-based techniques to enhance the resilience
and performance of BFT algorithms. That work is abstract
and far from practical solutions: it requires infinite memory and solves only single-shot consensus, stopping short of a solution for an SMR system. R\"{u}sch \textit{et al.}~\cite{rusch2018towards} design a Byzantine fault tolerant system that uses RDMA to improve performance, but requires $3f{+}1$ processes.

Prior work identifies vulnerabilities in the current
  generation of RDMA hardware and proposes ways to mitigate
  them~\cite{taranov2020srdma,rothenberger2021redmark}.
That work is orthogonal to \sysname and could be applied to it.
We expect that these problems will eventually be fixed with future NICs.

By separating execution from agreement~\cite{separating}, one can
  reduce the number of execution replicas to $2f{+}1$, but
  $3f{+}1$ replicas are still required for agreement.
With a black-box mechanism to prevent equivocation, only $2f{+}1$ replicas are required for BFT~\cite{correia2004half,clement2012limited,bendavid2022revisiting,ben2021brief}.
Several BFT systems achieve that using trusted hardware as the
  black box: attested append-only memory (A2M)~\cite{a2m} uses a trusted log, TrInc~\cite{levin2009trinc} and MinBFT~\cite{VeroneseCBLV13} use a trusted counter,
Hybster~\cite{behl2017hybrids} uses Intel's SGX,
  CheapBFT~\cite{KapitzaBCDKMSS12} uses FPGAs, and
H-MFT~\cite{zhuo2014mft} uses trusted hypervisors to implement
  write-once tables.
Avocado~\cite{bailleu2021avocado} implements a high-performance, replicated, confidential key-value store using CPU enclaves as the trusted computing base.

Blockchain systems also tolerate Byzantine
  failures, but their latency is in the seconds or minutes due to
  their heavy use of cryptography~\cite{algorand, byzcoin}, proof of work~\cite{nakamotobitcoin, bitcoin-ng}, and/or batching~\cite{tendermint, hotstuff}.
The recent SplitBFT~\cite{splitbft} uses SGX and $3f{+}1$ replicas to strengthen the safety and confidentiality of blockchains in public clouds.

While most of the prior work is not focused on
microsecond-scale latency (and hence came up with different
solutions from ours),
some recent SMR systems aim for lower latency.
Mu~\cite{aguilera2020microsecond} is highly optimized and provides
microsecond-scale performance, but tolerates only crash failures.
SBFT~\cite{gueta2019sbft} tolerates Byzantine failures and
  uses a fast path to improve latency,
  but does not achieve microsecond-scale performance due to its
  use of cryptography.
BFT SMR systems with $3f{+}1$ replicas can avoid cryptographic signatures, for example, in PBFT's optimized implementations~\cite{castro1999practical}. However, this is impossible in a system with $2f{+}1$ replicas~\cite{correia2004half}; the key to \sysname's performance is thus avoiding signatures on the fast path.

Carbink~\cite{carbink} and Hydra~\cite{hydra} build reliable disaggregated memory 
  to improve memory utilization in a cluster, albeit without support for
  concurrent shared access.
MIND~\cite{lee2021mind}, GAM~\cite{cai2018gam}, and Clover~\cite{tsai2020clover}, on the other hand, provide reliable shared memory, but they do not tolerate Byzantine writers.

\section{Discussion} \label{sec:discussion}

\paragraph{Does microsecond 2f+1 BFT require disaggregated memory?}
To achieve microsecond-scale BFT SMR, one must avoid the use of
  expensive signatures and trusted components in the critical path.
\sysname does so via a fast/slow path design that uses disaggregated
  memory in the slow path.
This leads to a small trusted computing base, but
  there may be other ways to achieve non-equivocation
  in \tcb's slow path without affecting fast-path 
  performance.

\paragraph{Can \sysname work with a Byzantine network?}
\sysname assumes network connections are authenticated and tamper-proof,
  which is realistic in data centers, where widely deployed protocols
  such as IPsec and SSL provide such guarantees at line rate.
What if such protocols are not available?
We can implement simple authenticated channels within \sysname
  without signatures in the critical path, by augmenting messages
  with an HMAC.
With BLAKE3, creating or verifying 256-bits 
   HMACs takes $\approx$100\,ns. 
As a result, we expect less than 2\,\us of additional overhead on the fast path of \sysname.

\paragraph{What about \sysname's throughput?}
Any system can provide a throughput that is inverse of its latency.
For \sysname, that amounts to ${\approx}91$\,kops for small 32\,B requests.
\sysname doubles this figure,
  by exploiting the slack between the processing of events in a consensus slot to interleave two requests with minimal latency penalty.
Throughput can be further optimized using well-known techniques,
  such as batching~\cite{danezis2022narwhal} and running parallel consensus instances on multiple cores~\cite{behl2017hybrids}, but we have not done so.

\section{Conclusion} \label{sec:conclusion}

\sysname is the first BFT SMR system to achieve
  microsecond-scale latency, $2f{+}1$ replicas, practically bounded
  memory, and a small non-tailored trusted computing base.
To do that, \sysname introduces \tcbfull 
  to prevent equivocation, and a matching consensus algorithm.
\sysname uses disaggregated memory as a trusted computing base,
  which our prototype implements using RDMA.
We applied \sysname to three applications to show that BFT can be realistic in data centers, even in latency-critical settings.
\section*{Acknowledgments} \label{sec:acknowledgments}
We thank our anonymous reviewers for their valuable comments,
as well as our anonymous artifact evaluators for reviewing our prototype implementation.
Finally, we thank our colleagues Jovan Komatovic and Pierre-Louis Roman for their feedback.
This work has been supported in part by SNSF Early Postdoc.Mobility grant P2ELP2\_195126.

\clearpage

\ifextended
\appendix
\section{Correctness of \tcbfull}\label{app:correct-TCB}

In this section, we provide a correctness argument for the implementation of \tcb given in Algorithm~\ref{alg:tcb}.

\begin{observation}\label{obs:one-per-k}
A correct broadcaster $p$ \tb-broadcasts at most one \texttt{LOCK} message and at most one \texttt{SIGNED} message per sequence number $k$. Moreover, both of these broadcasts hold the same message $m$.
\end{observation}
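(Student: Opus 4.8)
The plan is to prove this observation by a direct inspection of the broadcaster's pseudocode in Algorithm~\ref{alg:tcb}, combined with the convention on sequence numbers fixed in the definition of \tcb (\S\ref{sec:tcb:def}). Since the claim concerns only what a \emph{correct} broadcaster does, I may assume that $p$ follows the protocol faithfully and therefore executes exactly the code of \textit{broadcast(k,m)}; no reasoning about receivers, registers, or Byzantine behavior is needed.

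First I would recall that, by definition, a correct broadcaster increments $k$ sequentially at every broadcast, starting from $k=1$. Consequently, distinct invocations of \textit{broadcast} use distinct arguments $k$, so any fixed sequence number $k$ is the argument of \emph{at most one} invocation of \textit{broadcast} at $p$ (at most, rather than exactly, since $p$ may simply stop broadcasting before reaching that $k$).

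Next I would examine the body of \textit{broadcast(k,m)}. It issues precisely two \tb-broadcasts: one \texttt{LOCK} message on line~\ref{alg:tcb:lock} and one \texttt{SIGNED} message on line~\ref{alg:tcb:signed}. Both carry the same local variable $m$ supplied to the invocation. Combining these two facts yields the counting bound: because $k$ is the argument of at most one invocation, and that single invocation emits exactly one \texttt{LOCK} and one \texttt{SIGNED}, a correct broadcaster \tb-broadcasts at most one \texttt{LOCK} and at most one \texttt{SIGNED} per sequence number $k$. The ``same message'' part is immediate, since within that one invocation the variable $m$ is identical in both broadcasts.

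I do not expect any real obstacle here: the argument is an unwinding of the broadcaster's code. The only point worth stating explicitly is the appeal to the sequential-increment convention, which is what rules out two separate invocations colliding on the same $k$; without it the counting bound would not hold. This observation is clearly intended as a small lemma that later proofs (e.g., of Agreement and Integrity) invoke to treat ``the'' \texttt{LOCK} and ``the'' \texttt{SIGNED} message of a correct broadcaster at index $k$ as well-defined and mutually consistent.
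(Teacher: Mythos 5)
Your proof is correct; the paper states this observation without any proof, treating it as immediate from inspection of the broadcaster's code, and your argument (sequential increment of $k$ rules out two invocations colliding on the same $k$; each invocation of \textit{broadcast(k,m)} emits exactly one \texttt{LOCK} and one \texttt{SIGNED}, both carrying the same $m$) is precisely the intended reasoning made explicit.
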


\begin{observation}\label{obs:one-locked-per-k}
A correct process $p$ \tb-broadcasts at most one \texttt{LOCKED} message per sequence number $k$.
\end{observation}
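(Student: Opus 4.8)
The plan is to prove the statement by establishing a simple monotonicity invariant on the receiver's \texttt{locks} array. Fix a correct process $p$ and a sequence number $k$, and write $j = k \bmod t$ for the slot that $k$ aliases to. I would first argue that the first component of $p$'s slot $\texttt{locks}[j]$ is non-decreasing throughout any execution. This slot is modified only inside the handler for \tb-deliver of a \texttt{<LOCK, $\cdot$, $\cdot$>} message (line~\ref{alg:tcb:upon-lock}), and only on the branch where the guard $k > k'$ on line~\ref{alg:tcb:check-higher-lock} succeeds, which overwrites the slot with the strictly larger identifier. Since event handlers execute atomically (no interleaving of two handler bodies at the same correct process), the stored identifier at slot $j$ can only increase over time.

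Next I would observe that $p$ emits a \texttt{LOCKED} message carrying identifier $k$ (line~\ref{alg:tcb:locked}) exclusively within this \texttt{LOCK} handler, and only on the guard-passing branch; on that same branch the slot $\texttt{locks}[k \bmod t]$ is set to $(k, m)$. Consequently, immediately after the \emph{first} \texttt{LOCKED} for $k$ is broadcast, the first component of $\texttt{locks}[k \bmod t]$ equals $k$, and by the monotonicity invariant it remains $\ge k$ for the remainder of the execution. I would then close the argument by considering any later invocation of the \texttt{LOCK} handler whose embedded identifier is again $k$: it reads some $k' \ge k$ from $\texttt{locks}[k \bmod t]$, so the guard $k > k'$ fails and no further \texttt{LOCKED} for $k$ is produced. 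As every \texttt{LOCKED} broadcast for $k$ stems from a guard-passing execution of this handler, and at most one such execution can occur, $p$ broadcasts at most one \texttt{LOCKED} per $k$.

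The main obstacle—and the reason the statement is not entirely trivial—is that the broadcaster need not be correct. A Byzantine broadcaster may \tb-broadcast several distinct \texttt{LOCK} messages sharing the same embedded identifier $k$ (possibly with different payloads $m$), and since \tb lacks agreement, a correct $p$ may \tb-deliver all of them and enter the handler repeatedly for the same $k$. The monotonicity invariant is precisely what neutralizes this: the argument never relies on uniqueness of the delivered \texttt{LOCK} messages, only on the fact that the slot's identifier never decreases once it reaches $k$.

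Finally, I would note that the same invariant also handles the aliasing introduced by the modular indexing $k \bmod t$. An intervening \texttt{LOCK} for some larger identifier $k_3 \ge k$ with $k_3 \equiv k \pmod t$ merely raises the stored identifier at slot $j$ further, which still forces the guard to fail for the original $k$; thus collisions in the \texttt{locks} array can only \emph{help} the claim, never break it. Assembling these observations yields the desired bound of at most one \texttt{LOCKED} per $k$.
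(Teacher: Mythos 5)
Your overall strategy is the same as the paper's: observe that \texttt{LOCKED} messages are emitted only on the guard-passing branch of the \texttt{LOCK} handler, and that monotonicity of the identifier stored in \texttt{locks}$[k \bmod t]$ prevents that guard from passing twice for the same $k$. However, your justification of the monotonicity invariant contains a concrete error: you claim that \texttt{locks}$[k \bmod t]$ ``is modified only inside the handler for \tb-deliver of a \texttt{LOCK} message,'' which is false. The slot is also written in the \texttt{SIGNED} handler (line~\ref{alg:tcb:update-lock-signed}), inside the branch guarded by line~\ref{alg:tcb:check-higher}. This is not a corner case: coupling the fast and slow paths through the shared \texttt{locks} array is exactly how the algorithm forces whichever path runs first to fix the value for the other, so the slow-path write site is central to the design and cannot be ignored when establishing an invariant about that array.

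The gap is fixable, and the conclusion survives, because the \texttt{SIGNED}-handler update also never decreases the stored identifier: it fires only when $k > k'$ (strict increase) or when $k = k'$ and $m = m'$ (no change). Once both mutation sites are accounted for, your non-decreasing invariant holds and the rest of your argument---including the handling of Byzantine broadcasters and of aliasing under the modular indexing---goes through as written. The paper's own proof is careful on precisely this point: it states that \texttt{locks}$[k\%t]$ is modified only at lines~\ref{alg:tcb:update-lock} and~\ref{alg:tcb:update-lock-signed} and is updated monotonically across both. You should amend your first paragraph to survey both write sites before asserting the invariant; as it stands, a reader checking your claim against Algorithm~\ref{alg:tcb} would find it false.
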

\begin{proof}
Correct processes only broadcast \texttt{LOCKED} messages at line~\ref{alg:tcb:locked}.
Moreover, \texttt{locks}$[k\%t]$, which is only modified at lines~\ref{alg:tcb:update-lock} and \ref{alg:tcb:update-lock-signed}, is updated strictly monotonically.
Thus, once the branch is entered at line~\ref{alg:tcb:check-higher-lock} (and thus \texttt{locks}$[k\%t]$ updated at line~\ref{alg:tcb:update-lock}), it cannot be entered for the same $k$, which ensures that line~\ref{alg:tcb:locked} is executed at most once per $k$ at correct processes. 
\end{proof}

\begin{lemma}[Tail-Validity] \label{lem:ctb:1}
If a correct process $p$ broadcasts $(k, m)$ and never broadcasts a message $(k', m')$ with $k' \ge k + t$, then all correct processes eventually deliver $(k, m)$.
\end{lemma}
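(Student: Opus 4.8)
The plan is to prove the claim through the slow (signature-based) path of Algorithm~\ref{alg:tcb}, which is the liveness-bearing path; the fast path may stall when up to $f$ processes are Byzantine or slow, so I do not rely on it. Fix the correct broadcaster $p$ together with the index $k$ for which $p$ broadcasts $(k,m)$ and never broadcasts any index $\ge k+t$, and fix an arbitrary correct receiver $q$; I will show $q$ eventually delivers $(k,m)$.

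First I would show that $q$ eventually \tb-delivers the \texttt{SIGNED} message carrying $(k,m)$ from $p$. The crux is the relation between \tcb's tail $t$ and \tb's tail $2t$: each \tcb-broadcast issues exactly two \tb-broadcasts, a \texttt{LOCK} and a \texttt{SIGNED} (lines~\ref{alg:tcb:lock}--\ref{alg:tcb:signed}), so $p$'s \tb-broadcasts alternate between \texttt{LOCK} and \texttt{SIGNED}, two per \tcb index, in increasing order of index. Since $p$ never \tcb-broadcasts an index $\ge k+t$, at most $2(t-1)$ \tb-broadcasts follow the \texttt{SIGNED} message for index $k$, so both the \texttt{LOCK} and \texttt{SIGNED} messages for index $k$ lie within $p$'s last $2t$ \tb-broadcasts. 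By \tb's tail-validity every correct process, hence $q$, eventually \tb-delivers both, triggering the handler at line~\ref{alg:tcb:upon-signed} for $(k,m)$.

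Next I would argue that this handler runs to completion and reaches \textit{deliver\_once}$(k,m)$ at line~\ref{alg:tcb:slow-delivery}, i.e., none of its early returns fire. The signature check at line~\ref{alg:tcb:valid-sig} passes because $p$ is correct and signed $(k,m)$. For the guard at line~\ref{alg:tcb:check-higher}, I would observe that \texttt{locks}$[k\%t]$ only ever stores values drawn from $p$'s genuine \texttt{LOCK}/\texttt{SIGNED} messages; the tail assumption rules out any index exceeding $k$ aliasing to slot $k\%t$, and by Observation~\ref{obs:one-per-k} the only index-$k$ value is $(k,m)$, so the stored entry is either an older index ($k>k'$) or exactly $(k,m)$ (so $k=k'$ and $m=m'$), and the guard holds. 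For the two aborts in the register scan, I would invoke unforgeability: any entry passing \texttt{valid} at line~\ref{alg:tcb:valid-check} is necessarily a message $p$ actually signed, i.e., a genuine broadcast of $p$. Hence the equivocation abort at line~\ref{alg:tcb:byzantine-check} (which needs $k'=k$ and $m'\neq m$) cannot fire by Observation~\ref{obs:one-per-k}, nor can the out-of-tail abort at line~\ref{alg:tcb:tail-check}, since $k'>k$ with $k'\equiv k\pmod t$ forces $k'\ge k+t$, an index $p$ never broadcasts. Regularity of the SWMR registers is harmless here: a concurrently read, possibly stale, value is still a genuine past broadcast of $p$, while Byzantine-written garbage fails \texttt{valid}.

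Finally, I would close at \textit{deliver\_once}, where the only obstacle is the guard comparing $k$ against \texttt{delivered}$[k\%t]$. I would establish a short invariant that \texttt{delivered}$[k\%t] \le k$ always holds and that equality implies $(k,m)$ was already delivered: every value written to this slot is the index of a \textit{deliver\_once} call whose message is a genuine broadcast of $p$---directly for slow deliveries, and for fast deliveries because the unanimity test preceding line~\ref{alg:tcb:fast-delivery} requires at least one correct locker, which only locks genuine values---so by the tail assumption it cannot exceed $k$, and an index equal to $k$ must carry $m$ by Observation~\ref{obs:one-per-k}. Hence $q$ either has already delivered $(k,m)$ or delivers it now. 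I expect the $t$-versus-$2t$ tail counting and this final \texttt{delivered}-array invariant to be the only delicate points; everything else reduces to unforgeable signatures and the one-message-per-index property of a correct broadcaster.
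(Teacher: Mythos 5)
Your proof is correct and follows essentially the same route as the paper's own: establish \tb-delivery of the \texttt{SIGNED} message, verify the signature test and the \texttt{locks} guard pass, rule out the two aborts at lines~\ref{alg:tcb:byzantine-check} and~\ref{alg:tcb:tail-check} via unforgeability and the tail assumption, and conclude via \texttt{deliver\_once}. You additionally spell out two points the paper leaves implicit---the $2t$-versus-$t$ tail accounting that justifies \tb-delivery of the \texttt{SIGNED} message, and the invariant that \texttt{delivered}$[k\%t]$ never exceeds $k$ (so a skipped delivery means $(k,m)$ was already delivered)---which are sound refinements of the same argument rather than a different one.
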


\begin{proof}
Let $p,k,m$ be as in the statement of the lemma and let $q$ be a correct receiver. We will show that $q$ eventually delivers $(k, m)$, which is sufficient to prove the lemma.

Since $p$ is correct, $p$ \tb-broadcasts $\langle \texttt{SIGNED}, k, m, sig\rangle$ with a valid signature. Since both $p$ and $q$ are correct, $q$ eventually \tb-delivers it.

The $\langle \texttt{SIGNED}, k, m, sig\rangle$ message of $p$ will trigger at $q$ the event at line~\ref{alg:tcb:upon-signed}. 
Given that $p$'s signature is valid, the check at line~\ref{alg:tcb:valid-sig} succeeds.
By the premise, $p$ does not broadcast any message with sequence number $k' \ge k+t$, so \texttt{locks}$[k\%t]$, which is only modified at lines~\ref{alg:tcb:update-lock} and \ref{alg:tcb:update-lock-signed}, cannot contain a value greater than $k$. Moreover, since no other message $m$ is broadcast for $k$ (Observation~\ref{obs:one-per-k}), if \texttt{locks}$[k\%t]$ already contains $k$, it must also contain the message $m$. Thus, $q$ enters the if branch at line~\ref{alg:tcb:check-higher}. Since $p$ is correct and no process can forge $p$'s signature, no validly signed entry $(k,m')$ with $m\ne m'$ can exist in any process's SWMR slot, so $q$ does not return by triggering the check at line~\ref{alg:tcb:byzantine-check}. Finally, since $p$ does not broadcast any message with sequence number $k' \ge k+t$, no process's SWMR slot can contain a validly signed entry $(k',\cdot)$ with $k'>k$ and $k' \equiv k\ (\textrm{mod}\ t)$, so $q$ does not return by triggering the check at line~\ref{alg:tcb:tail-check}. Therefore, $q$ must call \texttt{deliver\_once}$(k,m)$ at line~\ref{alg:tcb:slow-delivery}. If this call does not deliver $(k,m)$, it must have been delivered before. Thus, $q$ eventually delivers $(k,m)$.
\end{proof}

\begin{lemma}[Agreement] \label{lem:ctb:2}
If $p$ and $q$ are correct processes, $p$ delivers $(k, m)$ from $r$, and $q$ delivers $(k, m')$ from $r$, then $m=m'$.
\end{lemma}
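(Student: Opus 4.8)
The plan is to prove agreement by a case analysis on \emph{how} $p$ and $q$ deliver, since a correct process can deliver $(k,\cdot)$ from $r$ in only two ways: on the fast path at line~\ref{alg:tcb:fast-delivery}, or on the slow path at line~\ref{alg:tcb:slow-delivery}. This yields three cases up to symmetry: both fast, both slow, and one of each. In every case the goal is to show that the value a correct process commits to for identifier $k$ is uniquely determined, regardless of whether the broadcaster $r$ is correct or Byzantine.

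For the \emph{fast/fast} case, I would first argue that fast delivery of $(k,m)$ by $p$ (which passes the unanimity test at line~\ref{alg:tcb:unanimity}) forces \emph{every correct receiver} to have \tb-broadcast $\langle\texttt{LOCKED},k,m\rangle$: by \tb's integrity, the $\langle\texttt{LOCKED}\rangle$ that $p$ attributes to a correct receiver was genuinely sent by it. By Observation~\ref{obs:one-locked-per-k}, each correct receiver sends at most one $\langle\texttt{LOCKED}\rangle$ per $k$, so its committed value is well defined. Since there is at least one correct receiver (there are $f{+}1$ correct processes out of $2f{+}1$), and $p$'s delivery forces that receiver's value to be $m$ while $q$'s delivery forces it to be $m'$, we conclude $m=m'$.

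For the \emph{slow/slow} case, the crux is the "fastest writer" argument hinted at after the algorithm. I would let $c$ be the correct slow-path deliverer of $(k,\cdot)$ whose write to its SWMR slot $k\%t$ (line~\ref{alg:tcb:copy}) completes earliest. For any other correct slow-path deliverer $d$, its own write, and hence its subsequent scan at line~\ref{alg:tcb:for}, begins only after $c$'s write completes. Because a correct process advances its slot strictly monotonically in the sequence number (the guard at line~\ref{alg:tcb:check-higher} together with the update at line~\ref{alg:tcb:update-lock-signed}), regularity of the register guarantees that $d$ reads from $c$'s slot a validly signed entry $(k',\cdot)$ with $k'\ge k$. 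If $k'=k$, then since $c$ stores a single value for $k$, that value is $c$'s committed message, and $d$ aborts at the Byzantine check (line~\ref{alg:tcb:byzantine-check}) unless it matches; if $k'>k$ in the same slot, $d$ aborts at the tail check (line~\ref{alg:tcb:tail-check}). Hence for $d$ to deliver at all it must deliver exactly $c$'s value, so all correct slow-path deliverers agree. Justifying precisely the step "$d$ reads a value with sequence number $\ge k$" from \emph{regular} (not atomic) register semantics together with per-writer monotonicity is the point I expect to be the main obstacle, and I would isolate it as a small auxiliary claim.

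For the \emph{mixed} case (say $p$ fast, $q$ slow), I would link the two arguments through the shared \texttt{locks} array. From $p$'s fast delivery, the fast/fast reasoning already shows that the correct process $q$ itself \tb-broadcast $\langle\texttt{LOCKED},k,m\rangle$, and therefore passed through the $\langle\texttt{LOCK}\rangle$ handler and set $\texttt{locks}[k\%t]=(k,m)$ at line~\ref{alg:tcb:update-lock}. I would then observe that $q$'s slow delivery of $(k,m')$ requires passing the guard at line~\ref{alg:tcb:check-higher} while $\texttt{locks}[k\%t]$ holds sequence number $k$; since the slow handler can only have advanced \texttt{locks} to $k$ with \emph{its own} value and would then prevent the $\langle\texttt{LOCK}\rangle$ branch (line~\ref{alg:tcb:check-higher-lock}) from firing, the fact that $q$ did broadcast $\langle\texttt{LOCKED}\rangle$ pins down the ordering: the fast lock update precedes the slow handler, forcing the stored value to be $m$ and hence $m'=m$. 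The subtlety here is the ordering argument between the two handlers at $q$—exactly the fast/slow linkage the paper advertises—which I would make rigorous by reasoning about the monotone time-evolution of $\texttt{locks}[k\%t]$ rather than about a single snapshot.
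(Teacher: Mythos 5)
Your proof is correct and follows essentially the same route as the paper's: the paper merely merges your fast/fast and mixed cases into a single case (``at least one process delivers via the fast path''), using exactly your argument that $p$'s fast delivery forces the correct process $q$ to have broadcast \texttt{LOCKED} for $(k,m)$---which by the one-\texttt{LOCKED}-per-$k$ observation and the \texttt{locks}$[k\%t]$ guard blocks both of $q$'s delivery paths for any $m'\ne m$---and its slow/slow case is your ``fastest writer'' regular-register argument. Your explicit resolution of the handler-ordering at $q$ in the mixed case (the slow handler running first would have prevented $q$ from ever broadcasting \texttt{LOCKED}) is in fact slightly more careful than the paper's terse statement of the same step, but it is the same underlying idea, not a different approach.
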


\begin{proof}
Assume towards a contradiction that $m\ne m'$. We consider two cases: (1) at least one process delivers via the fast path, and (2) both processes deliver via the slow path.

In case (1), assume \textit{wlog} that $p$ delivers via the fast path. Then $p$ must have \tb-delivered a \texttt{LOCKED} message from $q$ for $(k,m)$. So $q$ must have \tb-broadcast a \texttt{LOCKED} message at line~\ref{alg:tcb:locked}. By Observation~\ref{obs:one-locked-per-k}, $q$ cannot have broadcast a \texttt{LOCKED} message for $(k,m')$.
Thus, it cannot have delivered $m'$ via the fast path.
Moreover, $q$ must have put $m$ in \texttt{locks}$[k\%t]$ at line~\ref{alg:tcb:update-lock}. Thus $q$ cannot enter the if branch at line~\ref{alg:tcb:check-higher} and cannot deliver $(k, m')$ via the slow path either, hence a contradiction.

In case (2), assume \textit{wlog} that $p$ writes $(k,sig,m)$ to \texttt{SWMR}$[p][k\%t]$ (line~\ref{alg:tcb:copy}) before $q$ writes $(k,sig',m')$ to \texttt{SWMR}$[q][k\%t]$. Thus, when $q$ reads $p$'s $k\%t$ slot at line~\ref{alg:tcb:for}, $q$ sees either (i) $(k,sig,m)$ or (ii) $(k'',\cdot,\cdot)$ with $k'' > k$, and $k'' \equiv k\ (\textrm{mod}\ t)$. In case (i), $q$ will return by triggering the check at line~\ref{alg:tcb:byzantine-check}, and thus not deliver $(k,m')$, a contradiction. In case (ii), $q$ will return by triggering the check at line~\ref{alg:tcb:tail-check}, and thus not deliver, a contradiction.
\end{proof}

\begin{lemma}[Integrity] \label{lem:ctb:3}
If a correct process delivers $(k, m)$ from $p$ and $p$ is correct, $p$ must have broadcast $(k, m)$.
\end{lemma}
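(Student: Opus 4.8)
The plan is to proceed by case analysis on \emph{how} the correct delivering process, call it $q$, came to deliver $(k,m)$. Inspecting Algorithm~\ref{alg:tcb}, every delivery is routed through \texttt{deliver\_once}$(k,m)$, which is invoked from exactly two sites: the fast-path delivery at line~\ref{alg:tcb:fast-delivery} and the slow-path delivery at line~\ref{alg:tcb:slow-delivery}. So it suffices to show that in either case, correctness of $p$ forces $p$ to have executed \textit{broadcast(k,m)}. Throughout I would rely on two assumed facts from the model and from Section~\ref{sec:tcb:algorithm}: signatures are unforgeable, and \tb satisfies integrity (since \tb has all properties of \tcb except agreement).

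For the slow path (line~\ref{alg:tcb:slow-delivery}), the argument is short. To reach that line, $q$ must have \tb-delivered a \texttt{SIGNED} message $(k,m,sig)$ whose signature passed the validity test at line~\ref{alg:tcb:valid-sig}. Since $p$ is correct and no process can forge $p$'s signature, a valid signature on $(k,m)$ attributed to $p$ can only have been produced by $p$ itself. A correct $p$ produces and \tb-broadcasts such a signed message solely within \textit{broadcast(k,m)} at line~\ref{alg:tcb:signed}. Hence $p$ must have broadcast $(k,m)$.

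For the fast path (line~\ref{alg:tcb:fast-delivery}), I would combine a counting argument with integrity of \tb. Fast delivery of $(k,m)$ fires only once $q$ has recorded \texttt{LOCKED} messages for $(k,m)$ from \emph{all} receivers (the unanimity check ending at line~\ref{alg:tcb:fast-delivery}). Because at most $f$ of the processes are Byzantine while there are strictly more than $f$ receivers, at least one \emph{correct} receiver $s$ must have \tb-broadcast a \texttt{LOCKED} for $(k,m)$. A correct $s$ broadcasts \texttt{LOCKED} only at line~\ref{alg:tcb:locked}, which it reaches only after \tb-delivering a \texttt{LOCK} message $(k,m)$ from $p$ (line~\ref{alg:tcb:upon-lock}). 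By integrity of \tb and correctness of $p$, $p$ must have \tb-broadcast that \texttt{LOCK} message, and a correct $p$ does so only inside \textit{broadcast(k,m)} at line~\ref{alg:tcb:lock}. Therefore $p$ broadcast $(k,m)$, closing the case. The main obstacle I anticipate is the fast-path step: I must be careful to extract a correct \texttt{LOCKED}-sender from the unanimity condition and to push the integrity guarantee one layer down to the underlying \tb primitive, rather than re-proving it; everything else reduces to unforgeability and to reading off the two code sites at which a correct broadcaster emits messages for a given $k$.
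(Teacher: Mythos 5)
Your proof is correct and takes essentially the same route as the paper's: a case split on fast-path versus slow-path delivery, with unforgeability of $p$'s signature on the \texttt{SIGNED} message settling the slow path, and the \texttt{LOCKED}$\to$\texttt{LOCK} chain combined with \tb's integrity settling the fast path. The only cosmetic difference is that the paper identifies the correct \texttt{LOCKED}-witness as $q$ itself (since the unanimity check includes $q$'s own entry), whereas you extract one by a counting argument over the receivers; both are valid.
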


\begin{proof}
Let $p$ and $q$ be correct processes and assume $q$ delivers $(k, m)$ from $p$. There are two possible cases: (1) $q$ delivers using the fast path at line~\ref{alg:tcb:fast-delivery}, or (2) $q$ delivers using the slow path at line~\ref{alg:tcb:slow-delivery}. 

In case (1), $q$ must have \tb-delivered \texttt{LOCKED} messages for $(k,m)$ from all processes, including itself. Therefore $q$ must have  \tb-broadcast a \texttt{LOCKED} message for $(k,m)$ at line~\ref{alg:tcb:locked} after \tb-delivering a \texttt{LOCK} message for $(k,m)$ from $p$. Thus, $p$ must have \tb-broadcast a \texttt{LOCK} message, which $p$ can only do as part of the \tcb-broadcast call. So $p$ must have broadcast $(k,m)$.

In case (2), $q$ must have \tb-delivered a valid \texttt{SIGNED} message from $p$ for $(k,m)$. Since $p$ is correct and no process can forge its signature, $p$ must have broadcast a \texttt{SIGNED} message for $(k,m)$. So $p$ must have broadcast $(k,m)$.
\end{proof}

\begin{lemma}[No duplication] \label{lem:ctb:4}
No correct process delivers $(k, *)$ from $p$ twice.
\end{lemma}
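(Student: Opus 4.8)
The plan is to observe that every delivery funnels through the single procedure \texttt{deliver\_once}, which guards its \texttt{trigger deliver} statement with the test $k > \texttt{delivered}[k\%t]$ and, whenever it fires, immediately sets $\texttt{delivered}[k\%t] \gets k$. Both the fast-path delivery (line~\ref{alg:tcb:fast-delivery}) and the slow-path delivery (line~\ref{alg:tcb:slow-delivery}) go through this procedure, so it suffices to reason about \texttt{deliver\_once} alone. I would therefore fix a correct process $q$ and a sequence number $k$, and track the slot $\texttt{delivered}[k\%t]$ across the entire execution.

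First I would establish that $\texttt{delivered}[k\%t]$ is monotonically non-decreasing at a correct process: it is initialized to $-1$ (line~\ref{alg:tcb:delivered-array}) and is modified only inside \texttt{deliver\_once}, where it is overwritten by a value strictly larger than its current contents (the assignment executes only under the guard $k > \texttt{delivered}[k\%t]$). Hence its value never decreases. Next I would argue that the first time $q$ delivers $(k,\cdot)$, the guard forces $\texttt{delivered}[k\%t]$ to become exactly $k$, and by monotonicity it remains $\geq k$ forever afterwards. Consequently any later call $\texttt{deliver\_once}(k, m')$ fails the test $k > \texttt{delivered}[k\%t]$ and triggers no further delivery, which yields the claim.

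The one subtlety worth spelling out, and the only place the argument could look fragile, is the \emph{aliasing} of sequence numbers to array indices: the slot $\texttt{delivered}[k\%t]$ is shared by all $k'' \equiv k \pmod t$, so it may be updated by deliveries of \emph{other} sequence numbers between two calls concerning $k$. I would dispatch this by noting that any such intervening update can only raise the stored value: a competing $k''$ delivers only when $k'' > \texttt{delivered}[k\%t]$, so after $q$ has delivered $(k,\cdot)$ (leaving the slot $\geq k$), any $k'' \le k$ is barred by the same guard, while any $k'' > k$ only pushes the slot higher—in either case the slot stays $\geq k$, so the guard for $k$ can never be re-satisfied. This is the main obstacle, but it is mild: monotonicity of the shared slot is exactly what makes the aliasing harmless, so no case analysis over fast versus slow path is even required beyond observing that both route through \texttt{deliver\_once}.
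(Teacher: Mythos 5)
Your proof is correct and follows essentially the same route as the paper's: the paper likewise argues that all deliveries funnel through \texttt{deliver\_once}, whose guard on \texttt{delivered}$[k\%t]$ and subsequent update ensure at most one delivery per sequence number. Your additional treatment of monotonicity and of index aliasing modulo $t$ merely spells out details the paper's two-sentence proof leaves implicit.
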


\begin{proof}
Correct processes only deliver through \verb|deliver_once|. Lines~\ref{alg:tcb:if-delivered} and \ref{alg:tcb:update-delivered} ensure that a correct process only triggers $deliver$ at most once per sequence number $k$.
\end{proof}

\begin{theorem}
From Lemmas~\ref{lem:ctb:1}, \ref{lem:ctb:2}, \ref{lem:ctb:3} and \ref{lem:ctb:4}, Algorithm~\ref{alg:tcb} implements all the properties of \tcbfull. \qed
\end{theorem}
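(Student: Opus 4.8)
The plan is to recognize that this theorem is purely a packaging statement rather than a fresh argument: the definition of \tcbfull in Section~\ref{sec:tcb:def} consists of exactly four properties---Tail-validity, Agreement, Integrity, and No duplication---and each of these has already been discharged as a separate lemma for Algorithm~\ref{alg:tcb}. The theorem should therefore follow immediately by appealing to the four lemmas in turn, with no additional reasoning required.

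Concretely, I would first re-read the specification of \tcb to confirm that it lists precisely these four properties and nothing more, so that establishing all four is both necessary and sufficient for the claim that Algorithm~\ref{alg:tcb} implements the abstraction. I would then match each property to its lemma: Tail-validity is Lemma~\ref{lem:ctb:1}, Agreement is Lemma~\ref{lem:ctb:2}, Integrity is Lemma~\ref{lem:ctb:3}, and No duplication is Lemma~\ref{lem:ctb:4}. Since every property of the definition is met by the algorithm and none is left unproven, the conjunction of the four lemmas is exactly the assertion that Algorithm~\ref{alg:tcb} realizes \tcbfull.

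The only point requiring any care---and it is bookkeeping rather than mathematics---is to verify that the correspondence is complete: that no property of \tcb has been overlooked, and that each lemma statement quotes the corresponding property verbatim, matching the same quantifiers, the same tail parameter $t$, and the same distinction between correct and Byzantine broadcasters. Because the lemma headings in this section were written to mirror the definition literally, I expect this check to be trivial, and I do not anticipate any genuine obstacle. The proof thus collapses to a single sentence citing Lemmas~\ref{lem:ctb:1}, \ref{lem:ctb:2}, \ref{lem:ctb:3}, and \ref{lem:ctb:4}, which is presumably why the author marks it with \emph{qed} inline.
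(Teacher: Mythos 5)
Your proposal is correct and matches the paper exactly: the theorem is stated with an inline \qed and no proof body, precisely because it is the conjunction of Lemmas~\ref{lem:ctb:1}--\ref{lem:ctb:4}, which correspond one-to-one to the four properties (Tail-validity, Agreement, Integrity, No duplication) defining \tcbfull in Section~\ref{sec:tcb:def}. Your bookkeeping check that the lemma statements quote the definition verbatim is the only content needed, and it holds.
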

\section{Correctness of Consensus}\label{app:correct-consensus}

\newcommand{\slot}{s}
\newcommand{\req}{r}
\newcommand{\inview}{view}

\begin{lstlisting}[float=ht!,caption={Common Case (stable leader)},label={alg:consensus}]
@\textit{CTBcasts FIFO-deliver and block upon a Byzantine message}@

upon Init:
  @\inview@ = 0
  next_slot = 0
  checkpoint = (app_state: Initial, open_slots: [0, 99])@$_\Sigma$@
  for each replica:
    state[replica] = {
      view = 0, seal_view = @$\bot$@, new_view = @$\bot$@,
      prepares: Map<slot, PREPARE> = {},
      commits: Map<slot, COMMIT> = {},
      checkpoint = (Initial, [0, 99])@$_\Sigma$@ }

def Propose(req):
  wait (leader(@\inview@) == me and next_slot in checkpoint.open_slots and NEW_VIEW broadcast i@\phantom{}@f view @$>$@ 0)
  CTBcast-bcast <PREPARE, @\inview@, next_slot++, req>

upon CTBcast-dlvr <PREPARE, v, @\slot@, @\req@> from p as P:@\label{alg:cons:deliver-prepare}@
  state[p].prepares[@\slot@] = P
  if v != @\inview@ or @\slot@ @$\notin$@ checkpoint.open_slots: return
  TBcast-bcast <WILL_CERTIFY, v, @\slot@> # Fast path
  TBcast-bcast <CERTIFY, sign(P)> # Slow path

# Fast path
upon TBcast-dlvr <WILL_CERTIFY, v, @\slot@> from 2f+1:
  if v != @\inview@ or @\slot@ @$\notin$@ checkpoint.open_slots: return
  TBcast-bcast <WILL_COMMIT, v, @\slot@>

upon TBcast-dlvr <WILL_COMMIT, v, @\slot@> from 2f+1:
  if v != @\inview@ or @\slot@ @$\notin$@ checkpoint.open_slots: return
  trigger once Decide(@\slot@, state[leader(v)].prepares[s].req)@\label{alg:cons:fast-decide}@
  
# Slow path
upon TBcast-dlvr <CERTIFY, <P., v, @\slot@, _>@$_\sigma$@> from f+1 as P@$_\Sigma$@:
  if v != @\inview@ or @\slot@ @$\notin$@ checkpoint.open_slots: return
  CTBcast-bcast <COMMIT, P@$_\Sigma$@>
  
upon CTBcast-dlvr <COMMIT, P@$_\Sigma$@> from p as C:
  state[p].commits[P@$_\Sigma$@.slot] = C
  if dlvred f+1 COMMIT with a matching PREPARE:
    trigger once Decide(P@$_\Sigma$@.slot, P@$_\Sigma$@.req)@\label{alg:cons:slow-decide}@

# Checkpoints
after having decided on all checkpoint.open_slots:@\label{alg:cons:every-x}@
  wait fo@\phantom{}@r all decided requests to be applied to the App
  next_cp = (App.Snapshot(), checkpoint.open_slots + 100)
  TBcast-bcast <CERTIFY_CHECKPOINT, sign(next_cp)>

upon TBcast-dlvr <CERTIFY_CHECKPOINT, c@$_\sigma$@> from f+1 as C@$_\Sigma$@:
  MaybeCheckpoint(C@$_\Sigma$@)

upon CTBcast-dlvr <CHECKPOINT, C@$_\Sigma$@> from p as CP:
  state[p].checkpoint = CP
  forget state[p].commits an@\phantom{}@d prepares @$\notin$@ C@$_\Sigma$@.open_slots
  MaybeCheckpoint(C@$_\Sigma$@)
  
def MaybeCheckpoint(C@$_\Sigma$@):
  if C@$_\Sigma$@ supersedes checkpoint:
    checkpoint = C@$_\Sigma$@
    App.BringUpToSpeed(checkpoint)
    TBcast-bcast <CHECKPOINT, checkpoint>
\end{lstlisting}

\begin{lstlisting}[float=ht!,caption={View Change},label={alg:consensus-view-change}]
upon suspicion of leader(view): ChangeView()

def ChangeView():
  for each <WILL_COMMIT, v@$|_{v==view}$@, @\slot@> bcast via TBcast:
    wait to have broadcast a matching COMMIT o@\phantom{}@r CHECKPOINT
  CTBcast-bcast <SEAL_VIEW, @++\inview@>
  
upon CTBcast-dlvr <SEAL_VIEW, v> from p as SV:
  state[p].seal_view = SV
  state[p].view = v
  send@$_{leader(v)}$@ <CRTFY_VC, v, sign((p, state[p]\new_view))>

upon dlvr f+1 matching <CRTFY_VC, v@$|_{v==view}$@, s@$_\sigma$@> about f+1 replicas as C:
  if me != leader(view): return
  CTBcast-bcast <NEW_VIEW, C>
  MaybeCheckpoint(highest checkpoint in C)
  for s in checkpoint.open_slots:
    CTB-bcast <PREPARE, v, s, MustPropose(s, C)>@\label{alg:cons:vc-prepare}@
  next_slot = checkpoint.open_slots.last + 1

upon CTBcast-dlvr <NEW_VIEW, certificates> from p as NV:
  state[p].new_view = NV
  while @\inview@ != NV.view + 1: ChangeView()
    
def MustPropose(slot, certificates):
  if slot > max open slot i@\phantom{}@n certificates: return Any
  return latest committed req fo@\phantom{}@r slot i@\phantom{}@n certificates or @$\bot$@
\end{lstlisting}

\begin{lstlisting}[float=ht!,caption={\tcb Summaries},label={alg:consensus-tcb-summaries}]
after CTBcast-dlvr the message with id % tail == 0 from p:
  send@$_p$@ <CERTIFY_SUMMARY, sign((p, id, state[p]))>

every tail invocations of CTBcast-bcast:@\label{alg:cons:summary-every}@
  block calls to CTBcast-bcast

upon dlvr <CERTIFY_SUMMARY, (me, id, _)@$_\sigma$@> from f+1 as S@$_\Sigma$@:
  TBcast-bcast <SUMMARY, S@$_\Sigma$@>
  unblock calls to CTBcast-bcast@\label{alg:cons:summary-unblock}@

upon TBcast-dlvr <SUMMARY, (p, id, history)@$_\Sigma$@>:
  when a gap is detected i@\phantom{}@n the dlvry of CTBcast from p:
    if the latest message dlvred fro@\phantom{}@m p is lower than id:
      dlvr i@\phantom{}@n order p's missed CTBcast messages i@\phantom{}@n history without running the Byzantine checks @(Alg.~\ref{alg:consensus-byz-checks})@
      continu@\phantom{}@e dlvring p's CTBcast messages afte@\phantom{}@r id
\end{lstlisting}

\begin{lstlisting}[float=ht!,caption={\tcb's Byzantine Checks},label={alg:consensus-byz-checks}]
def valid <PREPARE, v, @\slot@, @\req@> from p:
  state[p].view == v and leader(v) == p and
    s in state[p].checkpoint.open_slots and
    p never prepared slot @\slot@ before i@\phantom{}@n v and
    (v == 0 or (state[p].new_view != @$\bot$@ and
      @\req@ == MustPropose(s, state[p].new_view)))

def valid <COMMIT, P@$_\Sigma$@> from p as C:
  P@$_\Sigma$@.slot in state[p].checkpoint.open_slots and
    P@$_\Sigma$@.view == state[p].view and
    state[p].commits[P@$_\Sigma$@.slot] != C

def valid <CHECKPOINT, C@$_\Sigma$@> from p:
  C@$_\Sigma$@ supersedes state[p].checkpoint
  
def valid <SEAL_VIEW, v> from p:
  state[p].view < v
  
def valid <NEW_VIEW, certificates> from p:
  leader(state[p].view) == p and
    it is p's first non-CHECKPOINT message i@\phantom{}@n this view and
    each certificate is about a different replica and
    each certificate is s@$\phantom{}$@igned by f+1 different replicas and
    each certificate is about view state[p].view
\end{lstlisting}
This section gives the pseudocode of \sysname's consensus alongside a correctness argument.
Algorithm~\ref{alg:consensus} describes \sysname's operation under a stable leader. 
Algorithm~\ref{alg:consensus-view-change} describes view changes.
Algorithm~\ref{alg:consensus-tcb-summaries} describes how \summaries let \sysname handle the gaps
caused by \tcb's tail-validity.
Finally, Algorithm~\ref{alg:consensus-byz-checks} gives the explicit requirements for messages to pass \tcb's Byzantine checks.

\subsection{Validity}
\begin{lemma}\label{lem:validity}
For a fixed slot $s$, with no faulty processes, if some process $p$ delivers and accepts $\langle \texttt{PREPARE}, v, s, r\rangle$ in Algorithm~\ref{alg:consensus} at line~\ref{alg:cons:deliver-prepare}, then $r$ must have been proposed by some correct process.
\end{lemma}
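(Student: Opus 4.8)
The plan is to trace the accepted PREPARE back to its broadcaster and then combine the fail-free hypothesis with the Integrity property of \tcb. First I would observe that for $p$ to \emph{accept} $\langle \texttt{PREPARE}, v, s, r\rangle$ at line~\ref{alg:cons:deliver-prepare}, the message must have passed the Byzantine validity predicate for PREPARE messages given in Algorithm~\ref{alg:consensus-byz-checks}; this is because the header of Algorithm~\ref{alg:consensus} stipulates that correct processes block upon any \tcb message that fails these checks, so reaching line~\ref{alg:cons:deliver-prepare} without blocking witnesses validity. The clause of that predicate I care about is $\mathrm{leader}(v) = \text{sender}$, i.e.\ the PREPARE was delivered from the designated leader of view $v$.

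Next I would invoke the hypothesis that there are no faulty processes: in particular $\mathrm{leader}(v)$ is correct. I would then apply the Integrity property of \tcb (Lemma~\ref{lem:ctb:3}): since the correct process $p$ delivers $\langle \texttt{PREPARE}, v, s, r\rangle$ from the correct sender $\mathrm{leader}(v)$, that sender must actually have \tcb-broadcast this very message. I would be careful with the name clash here---the deliverer is called $p$ in the lemma while the broadcaster is called $p$ in Lemma~\ref{lem:ctb:3}---so I would instantiate Integrity with the deliverer set to the lemma's $p$ and the broadcaster set to $\mathrm{leader}(v)$. This step is what rules out the only alternative origin for the request (a fabricated or equivocated PREPARE), which cannot occur because the broadcaster is correct.

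Finally I would use the fact that a correct replica emits a PREPARE in exactly two places: inside \texttt{Propose} and inside the view-change handler at line~\ref{alg:cons:vc-prepare}. In both cases the act of \tcb-broadcasting $\langle \texttt{PREPARE}, v, s, r\rangle$ is, by definition, the leader \emph{proposing} its payload $r$. Hence $r$ was proposed by $\mathrm{leader}(v)$, which is correct, giving the desired conclusion. Notably, no induction on views is required, since it is the current (correct) broadcaster whose proposal we exhibit, regardless of how it chose $r$ (directly from a client in \texttt{Propose}, or via \texttt{MustPropose} during a view change).

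The main point requiring care---rather than a deep obstacle---is justifying that acceptance at line~\ref{alg:cons:deliver-prepare} genuinely entails the full PREPARE validity predicate, and specifically the clause $\mathrm{leader}(v) = \text{sender}$; once this is pinned down, the fail-free hypothesis together with \tcb Integrity makes the remainder immediate. I would also take a moment to confirm that ``proposed by some correct process'' is interpreted as ``a correct process \tcb-broadcast a PREPARE carrying it,'' which is exactly what Integrity delivers for us.
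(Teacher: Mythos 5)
There is a genuine gap, and it lies exactly where you claim the argument becomes easy: the final step, in which you declare that \tcb-broadcasting a \texttt{PREPARE} is ``by definition'' proposing its payload. That reading trivializes the lemma. In this protocol, \emph{proposing} a request means it enters consensus as a process's input via \texttt{Propose(req)}; the point of the validity lemma is to tie every accepted \texttt{PREPARE} payload back to such an input. But a correct leader also emits \texttt{PREPARE} messages during a view change (Algorithm~\ref{alg:consensus-view-change}, line~\ref{alg:cons:vc-prepare}), and there the payload is not the leader's input: it is whatever \texttt{MustPropose} recovers from the view-change certificates, i.e., a value carried over from \texttt{COMMIT} messages of earlier views. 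Your argument stops at ``the correct leader broadcast it,'' which says nothing about whether that recovered value was ever anybody's input. Under your interpretation, a protocol whose leader fabricates arbitrary values and puts them in \texttt{PREPARE}s would also satisfy ``validity,'' which is precisely what the property is meant to exclude.

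This is why the paper's proof is an induction on views, which you explicitly claim is unnecessary. The paper handles the view-change case by observing that if \texttt{MustPropose} returns $r$ because some certificate contains a \texttt{COMMIT} for $r$ from a view $v'\le v$, then a quorum delivered and accepted a \texttt{PREPARE} for $r$ in view $v'$, and the induction hypothesis then yields that $r$ was proposed (as input) by a correct process; if no certificate constrains the slot, the leader proposes its own input and the claim is direct. Your opening steps (acceptance at line~\ref{alg:cons:deliver-prepare} entails the validity predicate of Algorithm~\ref{alg:consensus-byz-checks}, hence the sender is $\mathrm{leader}(v)$, who is correct, and Integrity of \tcb pins the message to that leader) are sound and match the paper's implicit reasoning, but without the inductive trace-back through \texttt{MustPropose} the conclusion does not follow in the intended sense.
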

\begin{proof}[Proof Sketch]
We will prove the lemma by induction on the view $v$ in which $p$ accepts the \texttt{PREPARE} message.
The base case is $v=0$. Process $p$ must have delivered a \texttt{PREPARE} message for $r$ from the leader $\ell_0$ of view $0$. Since $\ell_0$ is correct, it only sends \texttt{PREPARE} messages for values that are in its input, or for values that are part of a valid view change certificate from the previous view. Since there is no previous view in view $0$, it must be that $r$ was $\ell_0$'s input.

Now, for the induction step, assume that the lemma is true up to view $v$, and examine the case in which $p$ accepts $\langle\texttt{PREPARE}, v+1, s, r\rangle$ in view $v+1$. All processes are assumed to be correct, so the \texttt{PREPARE} message must have been sent by $\ell_{v+1}$, the leader of view $v+1$. Correct processes only send one \texttt{PREPARE} message per slot per view, so $\ell_{v+1}$ must have sent $\langle\texttt{PREPARE}, v+1, s, r\rangle$ either as a new proposal, or during the view change from $v$ to $v+1$, in Algorithm~\ref{alg:consensus-view-change} at line~\ref{alg:cons:vc-prepare}.
In the first case, $r$ is by definition proposed by a correct process as part of $\ell_{v+1}$'s input.
In the second case, $r$ must be a valid value (i.e., be returned by \texttt{MustPropose}), given the view change certificates for view $v$. There are two cases in which $r$ is such a valid value: (1) one of the certificates contains a \texttt{COMMIT} messages for $r$ in $v'$ with $ v' \le$ $v$, or (2) none of the certificates contain a \texttt{COMMIT} message for $r$, and $r$ is the input of $\ell_{v+1}$. In case (1), a quorum of processes must have delivered and accepted $\langle\texttt{PREPARE}, v', s, r\rangle$ messages in view $v'$ with $v' \le v$ and thus, by induction, $r$ must have been proposed by some correct process. In case (2), $r$ is also proposed by a correct process. This concludes the induction step and the proof.
\end{proof}

\begin{theorem}[Weak Validity]
For a fixed slot $s$, with no faulty processes, if some process $p$ decides value $r$ in $s$, then $r$ must have been proposed by some correct process.
\end{theorem}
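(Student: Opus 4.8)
The statement should fall out almost immediately from Lemma~\ref{lem:validity}, which already establishes that accepting a \texttt{PREPARE} for $r$ in slot $s$ implies $r$ was proposed by a correct process. So the entire task is to bridge the gap between \emph{deciding} $r$ in slot $s$ and \emph{accepting a matching \texttt{PREPARE}} for $r$ in slot $s$: concretely, I would show that whenever a (correct) process $p$ triggers $\texttt{Decide}(s, r)$, some correct process must have delivered and accepted a message $\langle\texttt{PREPARE}, v, s, r\rangle$ at line~\ref{alg:cons:deliver-prepare}. Once that is in hand, Lemma~\ref{lem:validity} (applied in the view $v$ of that \texttt{PREPARE}) finishes the proof. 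Since we are under the no-faulty-processes assumption, all $2f{+}1$ participants are correct, which makes the back-tracing clean. There are exactly two decision points in Algorithm~\ref{alg:consensus} --- the fast path (line~\ref{alg:cons:fast-decide}) and the slow path (line~\ref{alg:cons:slow-decide}) --- so I would split into these two cases.

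\textbf{Fast path.} Here $p$ decides $r = \texttt{state[leader}(v)\texttt{].prepares}[s]\texttt{.req}$ only after \tb-delivering $\langle\texttt{WILL\_COMMIT}, v, s\rangle$ from $2f{+}1$ processes. With all processes correct, those $2f{+}1$ senders are exactly all participants, including $p$ itself; and a correct process emits \texttt{WILL\_COMMIT} only after \tb-delivering $2f{+}1$ \texttt{WILL\_CERTIFY}, which in turn requires it to have \tb-broadcast \texttt{WILL\_CERTIFY} at line~\ref{alg:cons:deliver-prepare}. That broadcast happens only if $p$ passed the acceptance checks on a delivered $\langle\texttt{PREPARE}, v, s, \cdot\rangle$. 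So $p$ has accepted a \texttt{PREPARE} for slot $s$ in view $v$, and since a correct process accepts at most one \texttt{PREPARE} per $(v,s)$, the value it stored (hence decided) is precisely the accepted $r$. Lemma~\ref{lem:validity} then applies.

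\textbf{Slow path.} Here $p$ decides $r = P_\Sigma.\texttt{req}$ upon delivering $f{+}1$ \texttt{COMMIT} messages carrying a certificate $P_\Sigma$ that aggregates $f{+}1$ signatures on $P = \langle\texttt{PREPARE}, v, s, r\rangle$. A correct process produces a \texttt{CERTIFY} signature on $P$ only at line~\ref{alg:cons:deliver-prepare}, i.e., after delivering and accepting $P$. Because signatures are unforgeable and every signer is correct, at least one correct process accepted $\langle\texttt{PREPARE}, v, s, r\rangle$; invoking Lemma~\ref{lem:validity} closes this case as well.

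\textbf{Main obstacle.} The routine part is the back-tracing itself; the delicate point is guaranteeing that the value reached by a decision is \emph{exactly} the value of an \emph{accepted} \texttt{PREPARE}, rather than merely a \texttt{PREPARE} that was stored at line~\ref{alg:cons:deliver-prepare} before the validity checks, or one from a mismatched view/slot. For the fast path this requires the uniqueness of accepted \texttt{PREPARE}s per $(v,s)$ at a correct process (so that the stored \texttt{req} coincides with the certified one); for the slow path it requires correctly reading off that the signed payload inside $P_\Sigma$ carries the decided \texttt{req}. Both follow from the non-equivocation guarantees of \tcb (Lemma~\ref{lem:ctb:2}) together with the FIFO interpretation of a broadcaster's messages, so I would state that dependence explicitly rather than re-deriving it.
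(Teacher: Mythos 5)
Your proposal is correct and follows essentially the same route as the paper's own proof sketch: a case split on the fast-path and slow-path decision points, back-tracing each to a delivered-and-accepted \texttt{PREPARE}, and then invoking Lemma~\ref{lem:validity}. Your version merely spells out the back-tracing (through \texttt{WILL\_CERTIFY}/\texttt{WILL\_COMMIT} and the \texttt{CERTIFY} signatures) in more detail than the paper's terser sketch, and the subtlety you flag about stored versus accepted \texttt{PREPARE}s is a legitimate refinement rather than a divergence.
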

\begin{proof}[Proof Sketch]
Process $p$ may decide $r$ either at (1) line~\ref{alg:cons:fast-decide} (fast path), or (2) at line~\ref{alg:cons:slow-decide} (slow path) of Algorithm~\ref{alg:consensus}. Let $v$ be the view in which $p$ decides $r$. In case (1), $p$ must have delivered and accepted a $\langle\texttt{PREPARE}, v, s, r\rangle$ in view $v$, so by Lemma~\ref{lem:validity}, $r$ must have been proposed by some correct process. In case (2), $p$ must have received valid \texttt{COMMIT} messages for $r$ from a quorum. Thus, a quorum of processes must have delivered and accepted a \texttt{PREPARE} message for $r$, so by Lemma~\ref{lem:validity}, $r$ must have been proposed by some correct process.
\end{proof}

\subsection{Agreement}

\begin{observation}\label{obs:conflicting-prepares}
For a fixed slot $s$ and view $v$, two correct processes never deliver and accept conflicting \texttt{PREPARE} messages.
\end{observation}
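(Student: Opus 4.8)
The plan is to argue by contradiction: suppose a correct process $p$ delivers and accepts $\langle\texttt{PREPARE}, v, s, r\rangle$ while a correct process $q$ delivers and accepts $\langle\texttt{PREPARE}, v, s, r'\rangle$ with $r \neq r'$. First I would invoke the \texttt{valid} predicate for PREPARE in Algorithm~\ref{alg:consensus-byz-checks}, whose clause $\textit{leader}(v)==p$ forces both PREPAREs to originate from the \emph{same} broadcaster $\ell = \textit{leader}(v)$ over \tcb. This reduces the claim to reasoning about the \tcb identifiers $k_p$ and $k_q$ at which $p$ and $q$ respectively deliver their PREPARE for slot $s$ in view $v$ from $\ell$, and lets me bring the per-identifier guarantees of \tcb to bear.

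Next I would split on whether these identifiers coincide. If $k_p = k_q$, I would apply \tcb's Agreement property (Lemma~\ref{lem:ctb:2}) directly: two correct processes delivering from the same broadcaster at the same identifier must deliver the same message, so $r = r'$, a contradiction. If $k_p \neq k_q$, say $k_p < k_q$, I would use that correct processes consume $\ell$'s \tcb stream in FIFO order, so $q$ delivers $\ell$'s message at $k_p$ before the one at $k_q$; Agreement then identifies this earlier message as $\langle\texttt{PREPARE}, v, s, r\rangle$. Hence, by the time $q$ reaches $k_q$, it has already recorded that $\ell$ prepared slot $s$ in view $v$, so the \texttt{valid} clause requiring that the broadcaster ``never prepared slot $s$ before in $v$'' rejects the PREPARE at $k_q$, contradicting that $q$ accepts it. The symmetric subcase $k_q < k_p$ is identical with the roles of $p$ and $q$ swapped.

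The hard part will be justifying the FIFO premise of the second case, since \tcb only guarantees tail-validity (Lemma~\ref{lem:ctb:1}) and may leave $q$ with gaps below the current tail, so $q$ might never observe $\ell$'s message at $k_p$ through ordinary delivery. Here I would appeal to the \tcb \summaries of Section~\ref{sec:smr-non-eq}: upon detecting a gap, $q$ uses an $f{+}1$-signed (hence unforgeable) \summary of $\ell$'s history to replay the missed messages in order before proceeding past the gap (Algorithm~\ref{alg:consensus-tcb-summaries}). The delicate point I would need to nail down is that this replay still records the missed PREPARE in $q$'s local state for $\ell$ — even though the redundant Byzantine checks are skipped during replay — so that the duplicate-slot clause still fires at $k_q$. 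With FIFO thereby restored, the two cases together rule out conflicting accepted PREPAREs, regardless of whether $\ell$ is correct (where it anyway broadcasts a single PREPARE per $(v,s)$ and Integrity together with Agreement suffice) or Byzantine.
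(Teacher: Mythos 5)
Your proof is correct and takes essentially the same approach as the paper's own proof sketch: it rests on the same three ingredients, namely that correct processes accept \texttt{PREPARE}s only from $\textit{leader}(v)$, that they accept at most one \texttt{PREPARE} per slot per view, and \tcb's Agreement property. The paper's sketch leaves your second case---equivocation across distinct \tcb identifiers, resolved by FIFO processing and \summary-based gap filling---entirely implicit, so your write-up is a faithful but more rigorous elaboration of the same argument.
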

\begin{proof}[Proof Sketch]
Correct processes deliver and accept \texttt{PREPARE} messages only when coming from the leader. Furthermore, they deliver and accept at most one \texttt{PREPARE} message per view per slot. Thus, by the Agreement property of \tcb, if two correct processes deliver and accept \texttt{PREPARE} messages in the same view, then those messages are for the same value and thus do not conflict.
\end{proof}

\begin{corollary}\label{cor:no-conflicting-commits}
For a fixed slot $s$ and view $v$, two processes never broadcast conflicting valid \texttt{COMMIT} messages. 
\end{corollary}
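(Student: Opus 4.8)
The plan is to argue by contradiction and to exploit the fact that a \texttt{COMMIT} message is deemed valid only if it carries a genuine certificate of $f{+}1$ \texttt{CERTIFY} signatures. Suppose two processes broadcast conflicting valid \texttt{COMMIT} messages for the same slot $s$ and view $v$: one embedding a certificate $P_\Sigma$ that aggregates $f{+}1$ signatures on $\langle\texttt{PREPARE}, v, s, r\rangle$, and the other embedding $P'_\Sigma$ aggregating $f{+}1$ signatures on $\langle\texttt{PREPARE}, v, s, r'\rangle$ with $r \neq r'$. Because signatures are unforgeable, the validity predicate of Algorithm~\ref{alg:consensus-byz-checks} cannot be satisfied by fabricated signatures, so both certificates consist of \emph{real} \texttt{CERTIFY} messages produced by actual replicas. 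This observation is exactly what lets the argument cover Byzantine broadcasters: a faulty replica may emit a \texttt{COMMIT}, but it still cannot manufacture the $f{+}1$ signatures the certificate requires.

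Next I would apply a counting argument. Since at most $f$ of the $2f{+}1$ replicas are Byzantine, any set of $f{+}1$ signers contains at least one correct replica; hence $P_\Sigma$ has a correct signer $c$ and $P'_\Sigma$ has a correct signer $c'$ (possibly, but not necessarily, equal). A correct replica emits $\langle\texttt{CERTIFY}, \texttt{sign}(P)\rangle$ only in the handler that fires after it delivers and accepts the corresponding \texttt{PREPARE} (line~\ref{alg:cons:deliver-prepare} of Algorithm~\ref{alg:consensus}, guarded by the checks on $v$ and \texttt{open\_slots}). Therefore $c$ must have delivered and accepted $\langle\texttt{PREPARE}, v, s, r\rangle$ and $c'$ must have delivered and accepted $\langle\texttt{PREPARE}, v, s, r'\rangle$, both in view $v$ for slot $s$.

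Finally I would invoke Observation~\ref{obs:conflicting-prepares}, which states that two correct processes never deliver and accept conflicting \texttt{PREPARE} messages for a fixed slot and view. Applying it to $c$ and $c'$ forces $r = r'$, contradicting the assumption that the two \texttt{COMMIT} messages conflict. I expect the only delicate point---the main obstacle---to be making precise what ``valid \texttt{COMMIT}'' unpacks to: one must tie the receiver-side validity predicate to the unforgeability of the embedded $f{+}1$ \texttt{CERTIFY} signatures, and confirm that the ``$f{+}1$ signers include a correct one'' guarantee is exactly the bridge to Observation~\ref{obs:conflicting-prepares}. Notably, unlike quorum-intersection arguments in $3f{+}1$ protocols, here I do not need the two signer sets to intersect in a correct replica; it suffices that each certificate separately witnesses one correct accepter, and non-equivocation at the \texttt{PREPARE} level then reconciles the two.
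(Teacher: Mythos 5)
Your proposal is correct and follows essentially the same route as the paper's own proof: unpack a valid \texttt{COMMIT} into its quorum of $f{+}1$ unforgeable \texttt{CERTIFY} signatures, extract one correct signer from each certificate, tie each correct signer's \texttt{CERTIFY} to the \texttt{PREPARE} it delivered and accepted, and close with Observation~\ref{obs:conflicting-prepares}. Your added remarks---that unforgeability is what handles Byzantine broadcasters, and that no quorum intersection is needed because each certificate separately yields a correct witness---are exactly the points the paper's sketch leaves implicit.
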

\begin{proof}[Proof Sketch]
Assume towards a contradiction that two processes $q$ and $p$ broadcast conflicting valid \texttt{COMMIT} messages.
Given that each valid \texttt{COMMIT} message is made of a quorum of valid \texttt{CERTIFY} messages, $p$ and $q$ must have delivered two quorums of valid \texttt{CERTIFY} messages about different \texttt{PREPARE} messages.
By definition, each quorum must contain one correct process.
Moreover, a correct process only broadcasts a \texttt{CERTIFY} message about the \texttt{PREPARE} message it delivered.
Thus, two correct processes delivered different \texttt{PREPARE} messages for the same slot in the same view.
This contradicts Observation~\ref{obs:conflicting-prepares}.
\end{proof}

\begin{corollary}
For a fixed slot $s$, 
the view change certificates corresponding to two processes cannot have conflicting \texttt{COMMIT} messages from the same view.
\end{corollary}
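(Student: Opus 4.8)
The plan is to reduce this corollary to the immediately preceding Corollary~\ref{cor:no-conflicting-commits}. I would fix the slot $s$ and a view $v$, let $\Gamma_1$ and $\Gamma_2$ be the view change certificates about two processes $p_1$ and $p_2$, and assume toward a contradiction that $\Gamma_1$ contains a COMMIT $M_1=\langle\texttt{COMMIT}, P_{\Sigma}^1\rangle$ and $\Gamma_2$ a COMMIT $M_2=\langle\texttt{COMMIT}, P_{\Sigma}^2\rangle$, both pertaining to slot $s$ and view $v$, whose embedded PREPAREs carry different requests (i.e., they conflict).

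The crucial observation is that each such COMMIT is \emph{self-certifying}: by construction (the slow-path rule that aggregates $f{+}1$ \texttt{CERTIFY} shares before broadcasting a COMMIT), each $P_\Sigma^i$ consists of $f{+}1$ valid \texttt{CERTIFY} signatures on a PREPARE for $(s,v)$. These signatures are unforgeable, so regardless of who assembled or forwarded $\Gamma_1$ and $\Gamma_2$---even if $p_1$, $p_2$, or the replicas signing the certificate shares are Byzantine---both $M_1$ and $M_2$ are genuine \emph{valid} COMMIT messages for slot $s$ and view $v$. This is the step where I discharge the Byzantine case: a faulty process cannot plant a COMMIT carrying a fabricated PREPARE into a certificate, because it cannot manufacture the $f{+}1$ \texttt{CERTIFY} signatures that a valid COMMIT requires. (Alternatively, one can invoke that every certificate share is witnessed by at least one correct signer, whose stored state only ever records CTBcast-delivered and accepted---hence valid---COMMITs; but the self-certifying argument is the more robust route.)

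Having established that $M_1$ and $M_2$ are valid COMMIT messages for the same slot $s$ and the same view $v$, the conflict is immediately ruled out by Corollary~\ref{cor:no-conflicting-commits}: two conflicting valid COMMIT messages for a fixed slot and view cannot coexist, since each embeds a quorum of \texttt{CERTIFY} shares containing at least one correct replica, and two correct replicas certifying conflicting PREPAREs for $(s,v)$ would contradict Observation~\ref{obs:conflicting-prepares}. This contradicts our assumption, so the COMMIT messages carried by $\Gamma_1$ and $\Gamma_2$ for view $v$ cannot conflict. I expect the only real subtlety to be the self-certifying argument of the second paragraph, which bridges from ``contents of a (possibly Byzantine-assembled) certificate'' to ``genuine valid COMMIT''; once that is in hand, the result is a direct appeal to the already-proved corollary.
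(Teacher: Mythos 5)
Your proof is correct, but your main route differs from the paper's. The paper bridges from ``COMMIT appears in a certificate'' to ``COMMIT is covered by Corollary~\ref{cor:no-conflicting-commits}'' via the certificate's \emph{signers}: each view change certificate carries approvals from a quorum, hence at least one correct approver, who must have CTBcast-delivered that \texttt{COMMIT} from $p_i$; by the Integrity property of \tcb, $p_1$ and $p_2$ therefore \emph{broadcast} the conflicting \texttt{COMMIT}s, which lets the paper invoke Corollary~\ref{cor:no-conflicting-commits} exactly as stated (it speaks of broadcast \texttt{COMMIT}s). You instead bypass the certificate's approvers entirely and use the unforgeable $f{+}1$ \texttt{CERTIFY} signatures nested \emph{inside} each \texttt{COMMIT} to argue the two \texttt{COMMIT}s are genuine and valid regardless of who assembled the certificates; but since the prior corollary's statement is about broadcast \texttt{COMMIT}s and you only have \texttt{COMMIT}s-in-certificates, you must unfold its proof (one correct certifier per quorum plus Observation~\ref{obs:conflicting-prepares}) rather than apply it as a black box---which you correctly do. Your route buys robustness: it works even if $p_1$, $p_2$, and the certificate assemblers are all Byzantine, and it sidesteps the paper's somewhat delicate appeal to Integrity, which as stated only constrains \emph{correct} broadcasters. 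The paper's route buys modularity: the corollary chain is used statement-to-statement, at the cost of leaning on the quorum-approval structure of certificates and on Integrity. Interestingly, your parenthetical alternative is essentially the paper's proof.
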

\begin{proof}[Proof Sketch]
Assume not.
Then there exist processes $p_1$ and $p_2$ such that their view change certificates at the end of view $v$ are conflicting: they contain different \texttt{COMMIT} messages for values $r_1$ and $r_2$, respectively, from the same view.
Since each certificate contains an approval from a quorum, each certificate must have been approved by at least one correct process.
Thus, a correct process must have received a \texttt{COMMIT} from $p_1$ for $r_1$ and, in the same view, a correct process (not necessarily the same) must have received a \texttt{COMMIT} from $p_2$ for $r_2$. By the Integrity property of \tcb, this implies that $p_1$ and $p_2$ must have sent conflicting commits for the same slot and view, which is impossible by Corollary~\ref{cor:no-conflicting-commits}.
\end{proof}

\begin{lemma}\label{lem:no-subsequent-prepare}
For a fixed slot $s$ and view $v$, if a quorum broadcasts \texttt{COMMIT} messages for the same value $r$, then no correct process accepts a \texttt{PREPARE} message for any other value $r'\ne r$ in any view $v'\ge v$.
\end{lemma}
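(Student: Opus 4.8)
The plan is to prove this by strong induction on the view $v' \ge v$, mirroring the quorum-intersection argument at the heart of PBFT-style agreement proofs but adapted to \sysname's $2f{+}1$ setting, where non-equivocation of \texttt{COMMIT} messages (sent over \tcb) replaces the usual $3f{+}1$ intersecting-quorum reasoning. Throughout I would lean on one auxiliary fact, essentially the content of the proof of Corollary~\ref{cor:no-conflicting-commits}: any valid \texttt{COMMIT} for slot $s$, value $x$, and view $u$ implies that some correct process delivered and accepted $\langle\texttt{PREPARE}, u, s, x\rangle$, because a valid \texttt{COMMIT} aggregates $f{+}1$ \texttt{CERTIFY} shares, at least one from a correct process, and a correct process only certifies the \texttt{PREPARE} it accepted.

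Let $P(w)$ abbreviate the statement that no correct process accepts a \texttt{PREPARE} for a value other than $r$ in view $w$; I would establish $P(w)$ for all $w \ge v$. For the base case $v' = v$, the hypothesis gives a quorum of $f{+}1$ processes broadcasting \texttt{COMMIT} for $r$ in view $v$, so by the auxiliary fact some correct process accepted $\langle\texttt{PREPARE}, v, s, r\rangle$; Observation~\ref{obs:conflicting-prepares} then forbids any correct process from accepting a conflicting \texttt{PREPARE} for $r' \ne r$ in the same view $v$. For the inductive step at a view $w > v$, assuming $P(u)$ for all $u$ in $[v, w)$, I would suppose for contradiction that a correct process accepts $\langle\texttt{PREPARE}, w, s, r'\rangle$ with $r' \ne r$. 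Since $w > 0$, validity of this \texttt{PREPARE} (Algorithm~\ref{alg:consensus-byz-checks}) forces $r' = \texttt{MustPropose}(s, C)$, where $C$ is the \texttt{NEW\_VIEW} certificate for view $w$; thus it suffices to show that \texttt{MustPropose} must return $r$.

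The crux is to argue that $C$ necessarily surfaces a \texttt{COMMIT} for slot $s$ from a view $\ge v$. Here I would intersect the two $f{+}1$-sized sets: the quorum $Q$ that broadcast \texttt{COMMIT} for $r$ in view $v$, and the set $B$ of $f{+}1$ distinct replicas about which $C$ contains certificates. Since $|Q| + |B| = 2f{+}2 > 2f{+}1$, some replica $q^\ast$ lies in both. The certificate about $q^\ast$ is signed by $f{+}1$ witnesses, at least one correct, reporting $q^\ast$'s latest \texttt{COMMIT} for slot $s$; because $q^\ast$ broadcast its view-$v$ \texttt{COMMIT} for $r$ over \tcb, the Integrity and Agreement properties of \tcb guarantee that whatever any correct witness delivered from $q^\ast$ for $(s,v)$ is exactly $r$ and cannot have been equivocated, so $C$ reports a \texttt{COMMIT} for $s$ from some view $u \ge v$. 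Consequently \texttt{MustPropose} selects the highest-view \texttt{COMMIT} in $C$, from some view $u$ with $v \le u < w$. Applying the auxiliary fact to that \texttt{COMMIT} yields a correct process that accepted a \texttt{PREPARE} for its value in view $u$, and the induction hypothesis $P(u)$ forces that value to be $r$. Hence \texttt{MustPropose} returns $r$, contradicting $r' \ne r$ and closing the induction.

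The main obstacle I anticipate is exactly this intersection step: making precise that the \texttt{NEW\_VIEW} certificate cannot ``hide'' the committed value. This requires pinning down that the per-replica certificates faithfully report $q^\ast$'s latest \texttt{COMMIT} (leaning on the $f{+}1$ honest witnesses and on \tcb's non-equivocation, so that even a Byzantine $q^\ast$ cannot have shown different \texttt{COMMIT}s for $(s,v)$ to different witnesses), and that \texttt{MustPropose}'s highest-view rule, combined with the inductive guarantee that every \texttt{COMMIT} in $[v,w)$ carries $r$, leaves $r$ as the only admissible proposal. The remaining branch of \texttt{MustPropose}, returning \texttt{Any} or $\bot$ when no \texttt{COMMIT} is present, is ruled out precisely because the intersection shows at least one relevant \texttt{COMMIT} is always present in $C$.
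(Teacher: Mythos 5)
Your proposal is correct and shares the paper's overall skeleton---induction on the view number, with an identical base case (a correct member of the committing quorum must have accepted the \texttt{PREPARE} for $r$, and Observation~\ref{obs:conflicting-prepares} rules out a conflicting acceptance in view $v$)---but the heart of the induction step is argued along a genuinely different route. The paper argues \emph{backwards}: it assumes the conflicting value $r'$ is justified by some \texttt{COMMIT} appearing in the \texttt{NEW\_VIEW} certificates, notes that any such certificate carries the signature of at least one correct witness, unwraps that \texttt{COMMIT}'s embedded \texttt{CERTIFY} quorum to obtain a correct process that accepted a \texttt{PREPARE} for $r'$ in an earlier view, and contradicts the induction hypothesis. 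You argue \emph{forwards}: by counting, the committing quorum $Q$ and the $f{+}1$ certified replicas $B$ must intersect in $2f{+}1$-sized systems, so the certificates necessarily surface a \texttt{COMMIT} for slot $s$ from a view at least $v$; the induction hypothesis then forces every such \texttt{COMMIT} to carry $r$, so \texttt{MustPropose} can only return $r$. Your version buys something real: it explicitly closes two branches that the paper's sketch leaves implicit, namely the case where the certificates contain \emph{no} \texttt{COMMIT} for $s$ (so \texttt{MustPropose} would return \texttt{Any} or $\bot$ and the leader could propose anything), and the reason the \texttt{COMMIT} backing the induction hypothesis comes from a view $\geq v$ at all. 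The price is the obstacle you yourself flag: the intersection argument needs the certificate about $q^\ast$ to \emph{faithfully report} $q^\ast$'s view-$v$ \texttt{COMMIT} even when $q^\ast$ is Byzantine, which ultimately rests on \tcb{} agreement combined with FIFO interpretation of deliveries (and the \summary{} mechanism for gaps)---a point you gesture at but do not discharge; the paper's backward-tracing step avoids needing faithful reporting of any particular replica's state, but pays for it by silently assuming the certified \texttt{COMMIT}'s view is $\geq v$, which is precisely what your intersection supplies. Combining the two---your intersection to pin the view range and the \texttt{Any}/$\bot$ branch, the paper's witness-unwrapping for the contradiction---would give the most complete argument.
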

\begin{proof}[Proof Sketch]
We proceed by induction on $v'$. The base case is $v'=v$. Since a quorum broadcasts \texttt{COMMIT} messages for $r$ in view $v$, at least one correct process $p$ must have broadcast a \texttt{COMMIT} for $r$. Thus, some correct process must have delivered and accepted a \texttt{PREPARE} message for $r$. Thus, by Observation~\ref{obs:conflicting-prepares}, no correct process may accept a \texttt{PREPARE} for a different value $r'\ne r$ in the same view.

Now, for the induction step, assume the lemma is true up to view $v'$, and assume that in view $v'+1$, some correct process $p$ accepts a \texttt{PREPARE} message for some other value $r'\ne r$. For this to happen, $r'$ must be a valid value according to the view change certificates provided by the leader $\ell_{v'+1}$ of view $v'+1$. Thus, at least one process $q$ must have sent a \texttt{COMMIT} message $C$ for $r'$ in a view $v''\leq v'$. Furthermore, $C$ must have been accepted by at least one correct process $w$, in order for $q$'s state to have been certified by a quorum. In order for $w$ to accept $C$, $C$'s corresponding \texttt{PREPARE} message must have been certified, and thus accepted, by at least one correct process in view $v''$. This contradicts our induction hypothesis. So it is impossible for any correct process to accept a \texttt{PREPARE} message for $r'$ in view $v'+1$. This completes the induction step and the proof. 
\end{proof}

\begin{theorem}[Agreement]
For a given slot $s$, correct processes cannot decide different values.
\end{theorem}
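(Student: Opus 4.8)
The plan is to argue by contradiction: suppose two correct processes $p$ and $q$ decide values $r$ and $r'$ for slot $s$ with $r \neq r'$, and let $v$ and $v'$ be the views in which they decide, chosen so that $v \le v'$. The overall strategy is to show that the earlier decision ``locks'' the value $r$ into view $v$, and that Lemma~\ref{lem:no-subsequent-prepare} then forbids any correct process from accepting a conflicting \texttt{PREPARE} in any view $\ge v$; since $q$'s later decision must rest on such an accepted \texttt{PREPARE}, this forces $r' = r$.

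First I would dispose of the easy case $v' = v$. Whether a process decides on the fast path (line~\ref{alg:cons:fast-decide}) or the slow path (line~\ref{alg:cons:slow-decide}), its decided value is the request carried by a \texttt{PREPARE} that some correct process delivered and accepted in view $v$: on the fast path this is explicit, and on the slow path it follows by tracing the $f{+}1$ delivered \texttt{COMMIT} messages back through their constituent \texttt{CERTIFY} messages (at least one from a correct signer, which certifies only a \texttt{PREPARE} it accepted). By Observation~\ref{obs:conflicting-prepares}, all correct processes that accept a \texttt{PREPARE} for slot $s$ in view $v$ accept the same value, so $r = r'$, a contradiction.

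For the main case $v' > v$ I would establish a bridge claim: a correct process deciding $r$ in view $v$ entails that a quorum of $f{+}1$ processes broadcast valid \texttt{COMMIT} messages for $r$ in view $v$. On the slow path this is immediate, since the decision consumes $f{+}1$ matching \texttt{COMMIT} messages and Corollary~\ref{cor:no-conflicting-commits} guarantees they all carry $r$. With this claim in hand I would apply Lemma~\ref{lem:no-subsequent-prepare} to $p$'s decision, concluding that no correct process accepts a \texttt{PREPARE} for any $r'' \neq r$ in any view $\ge v$. I would then show that $q$'s decision in $v' \ge v$ forces a correct process to accept a \texttt{PREPARE} for $r'$ in view $v'$: on the fast path $q$ itself accepted it (it decides the request of the \texttt{PREPARE} it holds, and delivery through \tcb's Byzantine checks means the \texttt{PREPARE} was accepted), and on the slow path the tracing argument of the previous paragraph applies verbatim. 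Combining these two facts yields $r' = r$, the desired contradiction.

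The hard part will be the fast-path instance of the bridge claim, where $p$'s decision rests on $2f{+}1$ \texttt{WILL\_COMMIT} messages---i.e.\ on \emph{promises}---rather than on materialized \texttt{COMMIT} messages. I would close this gap using the coupling between the two paths: accepting a \texttt{PREPARE} triggers a \texttt{CERTIFY} broadcast in parallel with \texttt{WILL\_CERTIFY} (both at line~\ref{alg:cons:deliver-prepare}), so the $\ge f{+}1$ correct replicas behind the $2f{+}1$ \texttt{WILL\_COMMIT} messages already furnish a quorum of \texttt{CERTIFY} messages for $r$ in $v$. Moreover, by the wait-loop of \texttt{ChangeView} in Algorithm~\ref{alg:consensus-view-change}, a correct replica never seals view $v$ before converting its \texttt{WILL\_COMMIT} promise into an actual \texttt{COMMIT} for $r$ in $v$. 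Since reaching $v' > v$ requires a \texttt{NEW\_VIEW} certificate assembled from replicas that sealed $v$, this certificate intersects the \texttt{WILL\_COMMIT} quorum in a correct replica whose \texttt{COMMIT} for $r$ in $v$ is recorded; that recorded \texttt{COMMIT} is exactly what constrains \texttt{MustPropose} (line~\ref{alg:cons:vc-prepare}) and re-supplies the quorum hypothesis needed to run the induction inside Lemma~\ref{lem:no-subsequent-prepare}. Making this promise-to-\texttt{COMMIT} propagation fully rigorous across a chain of successive view changes is where I expect the bulk of the work to lie.
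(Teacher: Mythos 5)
Your overall strategy is the paper's: the same backbone of Observation~\ref{obs:conflicting-prepares}, Corollary~\ref{cor:no-conflicting-commits} and Lemma~\ref{lem:no-subsequent-prepare}, and the same bridge from a fast-path decision to a quorum of \texttt{COMMIT}s via the wait-loop of \texttt{ChangeView} (the paper states this as ``a quorum must have broadcast \texttt{COMMIT} messages for $r_1$ before sealing view $v_1$''). The paper merely organizes the argument as four fast/slow case combinations instead of your two view cases. Note also that your closing worry is misplaced: once the bridge claim (a quorum of \texttt{COMMIT}s for $r$ in view $v$) is established, Lemma~\ref{lem:no-subsequent-prepare} already internalizes the entire cross-view induction, so no ``promise-to-\texttt{COMMIT} propagation across a chain of view changes'' remains to be done; your excursion through \texttt{NEW\_VIEW} certificates and \texttt{MustPropose} is re-proving the inside of that lemma.

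There is, however, a genuine gap: you conflate the view in which a process \emph{decides} with the view stamped on the \texttt{PREPARE}/\texttt{COMMIT} quorum that \emph{justifies} the decision. On the slow path (line~\ref{alg:cons:slow-decide}) these can differ: the validity check for a \texttt{COMMIT} only requires its embedded view to match the \emph{sender's} tracked view, not the receiver's, so a replica whose own view is $v'$ can decide from $f{+}1$ matching \texttt{COMMIT}s generated in some earlier view $w' \le v'$. Your argument orders the decision views $v \le v'$, applies Lemma~\ref{lem:no-subsequent-prepare} only in one direction (from $p$'s side), and then claims $q$'s decision yields an accepted \texttt{PREPARE} for $r'$ ``in view $v'$''. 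If $q$'s quorum view $w'$ is smaller than the view $w$ of $p$'s quorum, the traced \texttt{PREPARE} lives in view $w' < w$, below the lemma's threshold, and no contradiction follows; the same defect infects your ``easy case'' $v = v'$, where Observation~\ref{obs:conflicting-prepares} does not apply if the two underlying \texttt{PREPARE}s sit in different views. Such executions are in fact impossible, but ruling them out requires applying the lemma in the \emph{other} direction (from $q$'s earlier quorum against $p$'s accepted \texttt{PREPARE}), which is exactly what the paper does by introducing the quorum views $v_1', v_2'$ and case-splitting on their order (its Cases~2 and~4). The fix is mechanical---anchor the ordering on quorum views rather than decision views, or add the symmetric sub-case---but as written your proof does not cover these executions.
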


\begin{proof}[Proof Sketch]
Assume by contradiction that there exist two correct processes $p_1$ and $p_2$, such that $p_1$ decides $r_1$ in view $v_1$ and $p_2$ decides $r_2\ne r_1$ in view $v_2$. Assume further \textit{wlog} that $v_1 \leq v_2$. We consider four cases, based on whether $p_1$ and $p_2$ decide on the fast path or the slow path.

\paragraph{Case 1: Fast-fast.} Both $p_1$ and $p_2$ decide their respective values on the fast path. If $v_1 = v_2$, then $p_1$ and $p_2$ must have accepted conflicting \texttt{PREPARE}s in the same view, which is impossible by Observation~\ref{obs:conflicting-prepares}. Otherwise, if $v_1 < v_2$, then at least $f{+}1$ correct processes (a quorum) must have broadcast \texttt{COMMIT} messages for $r_1$ before sealing view $v_1$. Thus, by Lemma~\ref{lem:no-subsequent-prepare}, no correct process can accept a \texttt{PREPARE} for $r_2$ in $v_2$, so $p_2$ cannot decide $r_2$ on the fast path in $v_2$.

\paragraph{Case 2: Fast-slow.} $p_1$ decides on the fast path and $p_2$ decides on the slow path. Then, $p_1$ must have accepted a \texttt{PREPARE} for $r_1$ in view $v_1$ (call this Fact~1). Moreover, $p_2$ must have accepted \texttt{COMMIT} messages for $r_2$ from a quorum. This implies that a quorum broadcast \texttt{COMMIT} messages for $r_2$ in some view $v_2'\leq v_2$ (call this Fact 2). If $v_2' \leq v_1$, then we reach a contradiction with Fact 1 by Lemma~\ref{lem:no-subsequent-prepare}. If $v_2' > v_1$, then a quorum of correct processes must have broadcast \texttt{COMMIT} messages for $r_1$ before sealing $v_1$; thus, by Lemma~\ref{lem:no-subsequent-prepare}, we reach a contradiction with Fact 2, since no correct process could have accepted a \texttt{PREPARE} for $r_2$ in $v_2$. 

\paragraph{Case 3: Slow-fast.} This case is symmetric with Case 2 above.

\paragraph{Case 4: Slow-slow.} If both $p_1$ and $p_2$ decide on the slow path, then both processes must have accepted \texttt{COMMIT} messages from a quorum. Let $v_1'$ and $v_2'$ be the views in which the \texttt{COMMIT} messages accepted by $p_1$ and $p_2$, respectively, were sent. Assume \textit{wlog} that $v_1' \leq v_2'$. Then, by Lemma~\ref{lem:no-subsequent-prepare}, no correct process could have accepted a \texttt{PREPARE} for $r_2$ in view $v_2$. Thus, no correct process could have sent a \texttt{COMMIT} message for $r_2$ in $v_2$, and thus it is impossible for a quorum to have sent \texttt{COMMIT} messages for for $r_2$ in $v_2$.

We have reached a contradiction in all four cases. This completes the proof of the theorem.
\end{proof}

\subsection{Liveness}

In this section, we provide informal arguments for the liveness of our protocol. We assume that the system is eventually synchronous and that correct processes propose values infinitely often. 
Intuitively, liveness is ensured by three mechanisms: (1) the view change mechanism, (2) checkpoints and (3) \tcb \summaries.
First, we assume that \tcb messages are not dropped and explain why liveness is ensured by the first two mechanisms.
Then, we complete our explanation with the way \tcb summaries help overcome the problem of dropped messages.

The view change mechanism ensures that at least one correct process is able to decide forever.
Assume towards a contradiction that all correct processes stop deciding.
Then, as long as they do not make progress, correct processes will change view thanks to the view change protocol in Algorithm~\ref{alg:consensus-view-change}.
Eventually, after the global stabilization time (GST), all correct processes are guaranteed to (1) reach a view $v$ in which the leader is correct, and (2) communicate with each other in a timely manner.
Thus, given that the timely collaboration of $f{+}1$ processes is enough for the common path described in Algorithm~\ref{alg:consensus} to be live, the correct replicas decide, hence a contradiction.

However, having a single correct process deciding infinitely often is not enough for the overall system to make progress as clients need to obtain a response from $f{+}1$ processes.
The checkpoint mechanism guarantees that, if a correct process $p$ decides on slots infinitely often, then all correct processes also make progress.
This is because, in order to keep on making progress (and thus maintain correct processes under its control), the leader of a view is mandated to broadcast a \texttt{CHECKPOINT} message periodically.
This message is then re-broadcast by the potentially single correct process in the view and, after GST, delivered by all correct processes.
Because checkpoints are transferable, when another correct process receives a checkpoint, it is able to decide on the slots contained in the checkpoint and bring its application state up to speed with the latest decided requests.

Lastly, \tcb \summaries ensure that, if the system were to be reduced to only $f{+}1$ correct processes, they would be able to continue making progress in spite of \tcb's delivery gaps.
The worst scenario is arguably the one in which a correct process $p$ used to make progress with $f$ faulty processes before being let down by them, and the other $f$ correct processes ending up with a gap in their \tcb delivery of $p$'s messages due to asynchrony.
In this case, %
Algorithm~\ref{alg:consensus-tcb-summaries} ensures that $p$ will not risk creating a gap %
before having obtained a \summary to help overcoming it.
Using this \summary%
, $p$ can let correct processes continue delivering its messages by convincing them that they will not violate the safety of the protocol due to missed messages.
Moreover, $p$ will always obtain a new \summary: either it will be helped by Byzantine processes, or, after GST, it will get help from correct processes by combining the previous \summary with the last $t$ messages it broadcast.
\else
\section{Consensus Algorithm} \label{sec:consensus-algorithm}

This section gives the pseudocode of \sysname's consensus.
Due to space limitations, correctness arguments are given in the extended version of this paper~\cite{ubft-extended}.
Algorithm~\ref{alg:consensus} describes \sysname's operation under a stable leader. 
Algorithm~\ref{alg:consensus-view-change} describes view changes.
Algorithm~\ref{alg:consensus-tcb-summaries} describes how \summaries let \sysname handle the gaps
caused by \tcb's tail-validity.
Finally, Algorithm~\ref{alg:consensus-byz-checks} gives the explicit requirements for messages to pass \tcb's Byzantine checks.

\clearpage
\fi

\bibliographystyle{ACM-Reference-Format}
\bibliography{references}

\end{document}